\newtheorem{theorem}{Theorem}
\newtheorem{lemma}[theorem]{Lemma}
\newtheorem{corollary}[theorem]{Corollary}
\newtheorem{property}[theorem]{Property}
\newcommand{\comment}[1]{}
\newcommand{\calO}{\mathcal{O}}
\newcommand{\calK}{\mathcal{K}\,}
\newcommand{\calP}{\mathcal{P}}
\newcommand{\calR}{\mathcal{R}\,}
\newcommand{\calL}{\mathcal{L}\,}
\newcommand{\calD}{\mathcal{D}}
\renewcommand{\setminus}{-}
\newcommand{\squishlist}{
 \begin{list}{$\bullet$}
  { \setlength{\itemsep}{0pt}
     \setlength{\parsep}{3pt}
     \setlength{\topsep}{3pt}
     \setlength{\partopsep}{0pt}
     \setlength{\leftmargin}{1.5em}
     \setlength{\labelwidth}{1em}
     \setlength{\labelsep}{0.5em} } }
\newcommand{\squishlisttwo}{
 \begin{list}{$\bullet$}
  { \setlength{\itemsep}{0pt}
     \setlength{\parsep}{0pt}
    \setlength{\topsep}{0pt}
    \setlength{\partopsep}{0pt}
    \setlength{\leftmargin}{2em}
    \setlength{\labelwidth}{1.5em}
    \setlength{\labelsep}{0.5em} } }
\newcommand{\squishend}{
  \end{list}  }
\begin{document}

\renewcommand{\thefootnote}{\fnsymbol{footnote}}




\title{Routing regardless of Network Stability}
\author{
  Bundit Laekhanukit\thanks{
      School of Computer Science,
      McGill University.
      Supported by a Dr. and Mrs. Milton Leong Fellowship and by NSERC grant 288334.
      Email: {\tt blaekh@cs.mcgill.ca}
  } \and
  Adrian Vetta\thanks{
      Department of Mathematics and Statistics and
      School of Computer Science,
      McGill University.
      Supported in part by NSERC grants 288334 and 429598.
      Email: {\tt vetta@math.mcgill.ca} 
  } \and
  Gordon Wilfong\thanks{
      Bell Laboratories.
      Email: {\tt gtw@research.bell-labs.com}
  }
}




\date{\today}

\maketitle



\renewcommand{\thefootnote}{\arabic{footnote}}
\begin{abstract}
How effective are interdomain routing protocols, such as the {\em Border Gateway Protocol}, 
at routing packets?
 Theoretical analyses have attempted to answer this 
 question by ignoring the packets and instead 
 focusing upon protocol stability. To study stability, it suffices to model only the control plane (which determines
the routing graph) -- an
 approach taken in the {\em Stable Paths Problem}.
  To analyse packet routing requires modelling the interactions between the
  control plane and the forwarding plane (which determines where packets are forwarded), and our first contribution is to introduce such 
  a model. We then examine the effectiveness of packet routing in this model for the broad class
  {\em next-hop preferences with filtering}. 
  Here each node $v$ has a {\em filtering list} $\mathcal{D}(v)$ consisting of nodes it does not want its packets 
  to route through. Acceptable paths (those that avoid nodes in the filtering list) are ranked according to
  the {\em next-hop}, that is, the neighbour of $v$ that the path begins with.
  On the negative side, we present a strong inapproximability result.
  For filtering lists of cardinality at most one, given a network in
which an equilibrium is guaranteed to exist, it is NP-hard to approximate the
maximum number of packets that can be routed to within a factor of $n^{1-\epsilon}$, for any constant $\epsilon >0$.
  On the positive side, we give algorithms to show that in two fundamental cases {\em every} packet will eventually route
with probability one. 
  The first case is when each node's filtering list contains only itself, that is, $\mathcal{D}(v)=\{v\}$; this is the fundamental case in which 
  a node does not want its packets to cycle.
  Moreover, with positive probability every packet will be routed before the control plane reaches an equilibrium.
  The second case is when all the filtering lists are empty, that is, $\mathcal{D}(v)=\emptyset$. Thus, with probability one
  packets will route even when the nodes do not care if their packets cycle!
  Furthermore, with probability one every packet will route even when the control plane has {\em no} equilibrium at all.
  To our knowledge, these are the first results to guarantee the possibility that all packets get routed without stability.
These positive results are tight -- for the general case of filtering lists of cardinality one, it is not possible to ensure 
that every packet will eventually route.
\end{abstract}



\section{Introduction}
\label{sec:intro}

 In the {\em Stable Paths Problem} (SPP) \cite{GSW02},
we are given a directed graph $G=(V,A)$ and a sink (or destination) node $r$.
Furthermore, each node $v$ has a ranked list of some of its paths to $r$. The lowest ranked entry in the list is the ``empty path''\footnote{Clearly, the empty path
is not a real path to the sink; we call it a path for clarity of exposition.}; paths that are not ranked are considered unsatisfactory.
This preference list is called $v$'s list of {\em acceptable paths}.
A set of paths, one path $\calP(v)$ from each node $v$'s list of
acceptable paths, is termed {\em stable} if 
\begin{itemize}
\item[(i)] they are {\em consistent}: if $u\in \calP(v)$, then $\calP(u)$ must be the subpath of $\calP(v)$ beginning at $u$, and 
\item[(ii)] they form an {\em equilibrium}: for each node $v$, $\calP(v)$ is the path ranked highest by $v$ of the form $v\calP(w)$ 
where $w$ is a neighbour of $v$. 
\end{itemize}
The stable paths problem asks whether a stable set of paths exists in the network.
The SPP has risen to prominence as it is viewed as a {\em static} description of the problem that 
 the Border Gateway Protocol (BGP) is trying {\em dynamically} to solve. 
BGP can be thought of as trying to find a set of stable routes to $r$ so that routers can use these routes to send packets to $r$.

Due to the importance of BGP, both practical and theoretical aspects of the SPP have been studied in great depth.
In the main text, to avoid overloading the reader with practical technicalities, we focus on the combinatorial aspects 
of packet routing; in the Appendix we discuss the technical aspects and present 
a motivating sample of the vast literature on BGP.
Two observations concerning the SPP, though, are pertinent here and motivate our work:

\begin{enumerate}[(1)]
\item Even if a stable solution exists, the routing tree induced by a consistent set of paths might not be spanning. 
Hence, a stable solution may {\bf not} actually correspond to a functioning network
-- there may be isolated nodes that cannot route packets to the sink!
Disconnectivities arise because nodes may prefer the empty-path to any of the paths proffered by its neighbours;
for example, a node might not trust certain nodes to handle its packets securely or in a timely fashion, 
so it may reject routes traversing such unreliable domains.
This problem of non-spanning routing trees has quite recently been studied in the context of a version of BGP called iBGP~\cite{VCVB12}.
In Section \ref{sec:general}, we show that non-connectivity is
a very serious problem (at least, from the theoretical side) by presenting an $n^{1-\epsilon}$
hardness result for the combinatorial problem of finding a {\em maximum cardinality stable subtree}. 

\item The SPP says nothing about the dynamic behaviour of BGP. 
Stable routings are significant for many practical
reasons (e.g., network operators want to know the routes
their packets are taking), but while BGP is operating at the {\em control plane} level,
packets are being sent at the {\em forwarding plane} level without waiting
for stability (if, indeed, stability is ever achieved). Thus, it is important to study network performance 
in the dynamic case. For example, what happens to the packets whilst a network is unstable?
This is the main focus of our paper: to investigate packet routing under network dynamics.
\end{enumerate}


Towards this goal, we define a distributed protocol, inspired by BGP, that
stops making changes to the routing graph (i.e., becomes stable) if it achieves a 
stable solution to the underlying instance of SPP.
The current routing graph itself is determined by the control plane but the movement of packets is determined by the 
forwarding plane. Thus, our distributed protocol provides a framework under which the control and forwarding planes interact; essentially, this primarily means that we need to understand the relative speeds at which
links change and packets move. 





%




Given this model, we analyse the resulting trajectory of packets. 
In
a stable solution, a node in the stable tree containing the sink would have its packets route
whereas an isolated node would not. 
For unstable networks, or for stable networks that have not converged, things are much more complicated.
Here the routes selected by nodes are changing over time and, as we shall see, this may cause the packets to cycle.
If packets can cycle, then keeping track of them is highly non-trivial.
Our main results, however, are that for two fundamental classes of preference functions 
(i.e., two ways of defining acceptable paths and their rankings)
all packets will route with probability one in our model. That is, there is an
execution of our distributed protocol
such that {\em every} packet in the network will reach the destination
(albeit, possibly, slowly) even in instances where the network has no stable solution. 
(Note that we are ignoring the fact that in BGP packets typically have a {\em time-to-live}
attribute meaning that after traversing a fixed number of nodes the packet
will be dropped.)  Furthermore,
when the network does have a stable solution, we are able to guarantee packet routing even before the 
time when the network converges. 

These positive results on the routing rate are to our knowledge,  the first results to guarantee 
the possibility of packet routing without stability.
The results are also tight in the sense that, for any more expressive class of preference function, 
our hardness results show that guaranteeing that all packets eventually route is not possible -- thus, 
packets must be lost.



\section{The Model and Results}\label{prelim}


We represent a network by a directed graph $G=(V,A)$ on $n$ nodes.
The destination node in the network is denoted by a distinguished node
$r$ called a {\em sink} node.
We assume that, for every node $v\in V$, there is at least one directed path in $G$ from $v$ to the sink $r$, 
and that the sink $r$ has no outgoing arc.
At any point in time $t$, each node $v$ chooses at most one of its out-neighbours $w$ as its {\em chosen next-hop};
 thus, $v$ selects one arc $(v,w)$ or selects none. 
These arcs form a {\em routing graph} $\calR_t$, each component of which 
is a {\em $1$-arborescence},
an {\em in-arborescence}\footnote{An {\em in-arborescence} is a
  graph $T$ such that the underlying undirected graph is a tree and
  every node has a unique path to a root node. } $T$ plus possibly one 
arc $(v,w)$ emanating from the root $v$ of $T$; for example, $T$ and
$T\cup\{(v,w)\}$ are both $1$-arborescences.
(If the root of a component does select a neighbour, then that
component contains a unique cycle.) 
When the context is clear, for clarity of exposition, we abuse the term {\em tree} to mean a $1$-arborescence, and we use the term {\em forest} to mean a set of trees.
A component (tree) in a routing graph is called a {\em sink-component} if it has the sink $r$ as a root;
all other components are called {\em non-sink components}. 

Each node selects its outgoing arc according to its preference list of acceptable paths.
We examine the case where these lists can be generated using two of the most 
common preference criteria in practice: {\em next-hop preferences} and {\em filtering}. 
For next-hop preferences, each node $v\in V$ has a ranking on its {\em out-neighbours}, nodes $w$ such that $(v,w)\in A$. 
We say that $w$ is the {\em $k$-th choice} of $v$ if $w$ is an out-neighbour of $v$ with the $k$-th rank.
For $k=1,2,\ldots,n$, we define a set of arcs $A_k$ to be such that $(v,w)\in A_k$ if $w$ is the $k$-th choice of $v$, i.e., $A_k$ is the set of {\em the $k$-th choice arcs}.
Thus, $A_1,A_2,\ldots,A_n$ partition the set of arcs $A$, i.e., $A=A_1\cup A_2\cup \ldots A_n$.
We call the entire graph $G=(V,A)$ an {\em all-choice} graph.
A {\em filtering list}, $\mathcal{D}(v)$, is a set of nodes that
$v$ {\bf never} wants its packets to route through.
We allow nodes to use filters and otherwise rank routes via next-hop preferences, namely 
{\em next-hop preferences with filtering}.
 
To be able to apply these preferences,
each node $v\in V$ is also associated with a path $\calP(v)$, called 
$v$'s {\em routing path}.
The routing path $\calP(v)$ may {\bf not} be the same as an actual $v,r$-path in the routing graph.
We say that a routing path $\calP(v)$ is {\em consistent} if $\calP(v)$ is a  $v,r$-path in the routing graph; otherwise, we say that $\calP(v)$ is {\em inconsistent}.
Similarly, we say that a node $v$ is {\em consistent} if its routing path $\calP(v)$ is consistent; otherwise, we say that $v$ is {\em inconsistent}.
A node $v$ is {\em clear} if the routing path $\calP(v)\neq\emptyset$, i.e., $v$ (believes it) has a path to the sink; otherwise, $v$ is {\em opaque}. 
%
We say that a node $w$ is {\em valid} for $v$ or is a 
{\em valid choice} for $v$ if $w$ is clear and $\calP(w)$ contains no nodes in the filtering list $\calD(w)$.
If $w$ is a valid choice for $v$, and $v$ prefers $w$ to all other valid
choices, then we say that $w$ is the {\em best valid choice} of
$v$.
A basic step in the dynamic behaviour of BGP is that, at any time $t$, some subset $V_t$ of nodes is {\em activated} meaning
that every node $v\in V_t$ chooses the highest ranked acceptable path 
$\calP(v)$ that
is consistent with one of its neighbours' routing paths at time $t-1$.  
The routing graph $\calR_t$ consists of 
the first arc in each routing path at time $t$. 

Protocol variations result from such things as restricting 
$V_t$ so that $|V_t|=1$, specifying the relative rates that nodes are chosen to
be activated and allowing other computations to occur between these
basic steps. In our protocol, we assume that activation orderings are {\em fair} 
in that each node activates exactly once in each time period -- a {\em round} -- the 
actual ordering however may differ in each round.
While our protocol is not intended to model exactly the behaviour of BGP, 
we tried to let BGP inspire our choices and to capture the
essential coordination problem that makes successful dynamic routing hard. 
Again, a detailed discussion on these issues and on the importance of a fairness-type criteria
is deferred to the Appendix.
\begin{algorithm}
\caption{Activate($v$)}
\label{algo:activate}
\begin{algorithmic}[1]
\REQUIRE A node $v\in V\setminus\{r\}$. 
\IF{$v$ has a valid choice}
   \STATE Choose the best valid choice $w$ of $v$.
   \STATE Change the outgoing arc of $v$ to $(v,w)$.
   \STATE Update $\calP(v) := v\calP(w)$ (the concatenation of $v$ and $\calP(w)$).
\ELSE
   \STATE Update $\calP(v) :=\emptyset$.
\ENDIF
\end{algorithmic}
\end{algorithm}

\begin{algorithm}
\caption{Protocol($G$,$r$,$\calR_0$)}
\label{algo:protocol}
\begin{algorithmic}[1]
\REQUIRE A network $G=(V,A)$, a sink node $r$ and a routing graph $\calR_0$
\STATE Initially, every node generates a packet. 
\FOR{round $t:=1$ to $\ldots$}
  \STATE Generate a permutation $\pi_t$ of nodes in $V\setminus\{r\}$ using an external algorithm~$\mathbb{A}$. 
  \STATE {\bf Control Plane:} Apply Activate$(v)$ to activate each node in the order in $\pi_t$. This forms a routing graph $\calR_t$. 
  \STATE {\bf Forwarding Plane:} Ask every node to forward the packets it has, and wait until every packet is moved by at most $n$ 
  hops (forwarded $n$ times) or gets to the sink. 
  \STATE {\bf Route-Verification:} Every node learns which path it has in the routing graph, i.e., update $\calP(v):=\mbox{$v,r$-path in $\calR_t$}$. 
\ENDFOR
\end{algorithmic}
\end{algorithm}

This entire mechanism can thus be described using two algorithms as follows.
Once activated, a node $v$ updates its routing path $\calP(v)$ using the 
algorithm in Procedure~\ref{algo:activate}. 
The generic protocol is described in Procedure~\ref{algo:protocol}.
This requires an external algorithm $\mathbb{A}$ which 
acts as a {\em scheduler} that generates a permutation -- an order in which nodes will be activated in each round. 
We will assume that these permutations are independent and randomly generated. Our subsequent routing guarantees
will be derived by showing the existence of specific permutations that ensure all packets route.
These permutations are different in each of our models, which differ only in the filtering lists.
We remark that our model is incorporated with a {\em route-verification} step, but this is not a feature of BGP (see the Appendix for a discussion).



With the model defined, we examine the efficiency of packet routing for the three cases of
next-hop preferences with filtering:
\begin{itemize}
\item {\tt General Filtering.} The general case where the filtering
  list $\calD(v)$ of any node $v$ can be an arbitrary subset of nodes.
\item {\tt Not me!} The subcase where the filtering list of node $v$ consists only of itself, $\calD(v)=\{v\}$.
Thus, a node does not want a path through itself, but otherwise has no nodes it wishes to avoid.
\item {\tt Anything Goes!} The case where every filtering list is empty, $\calD(v)=\emptyset$. Thus a node does not 
even mind if its packets cycle back through it!
\end{itemize}

\subsection{Our Results.}

We partition our analyses based upon the types of filtering lists.
Our first result is a strong hardness result presented in Section~\ref{sec:general}.
Not only can it be hard to determine if every packet can be routed but
the maximum number of packets that can be routed cannot be approximated well
even if the network can reach equilibrium. 
Specifically,
\begin{theorem}
\label{thm:general} 
For filtering lists of cardinality at most one, it is NP-hard 
to approximate the maximum cardinality stable subtree to within a
factor of $n^{1-\epsilon}$, for any constant $\epsilon>0$.
\end{theorem}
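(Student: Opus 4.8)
The plan is to prove Theorem~\ref{thm:general} by a reduction from a hard-to-approximate problem on directed graphs --- most naturally, \textsc{Maximum Independent Set} or (better, since we want an $n^{1-\epsilon}$ bound) a problem already known to have an $n^{1-\epsilon}$ inapproximability threshold, such as \textsc{Maximum Independent Set} itself or \textsc{Longest Path}/\textsc{Directed Steiner}-type problems. The key object to engineer is an SPP instance with filtering lists of size at most one in which (a) a stable solution is always guaranteed to exist (so the hardness is genuinely about the \emph{size} of the stable subtree, matching the theorem's hypothesis that ``the network can reach equilibrium''), and (b) the maximum number of nodes lying in the sink-component of any stable solution encodes the optimum of the source problem, with a blow-up that converts a constant-factor or $n^{1-\epsilon}$ gap in the source into an $n^{1-\epsilon}$ gap here.

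First I would fix a source instance $H$ of the base problem and build a gadget for each vertex/element of $H$. Each gadget is a small directed sub-network feeding toward the sink $r$; the filtering list of the ``decision'' node in a gadget will contain a single node that is reachable precisely when a conflicting gadget has been ``activated,'' so that the single-element filter enforces the combinatorial constraint (e.g., two independent-set vertices cannot both route if they are adjacent in $H$). I would route all gadgets toward $r$ through a shared structure so that any globally consistent equilibrium exists (for instance, every node always has the empty path as a fallback, and the only cycles that the preference lists could create are broken by the filters), giving property~(a). To amplify the gap for property~(b), I would replace each ``yes/no'' gadget by a bundle of $\mathrm{poly}(n)$ parallel copies (a standard blow-up / padding trick), so that the difference between ``gadget routes'' and ``gadget does not route'' is a $\mathrm{poly}(n)$ multiplicative factor in the count of routed packets; choosing the polynomial appropriately turns the base gap into $N^{1-\epsilon}$ where $N$ is the size of the new instance.

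The two directions of correctness are then: from a good solution of $H$, I exhibit an equilibrium whose sink-component contains all the corresponding gadget-bundles (and hence $\Omega(N^{1-\epsilon/2})$ routed packets); conversely, from any equilibrium I extract a feasible solution of $H$ of value proportional to the number of routed packets, using the fact that the filters forbid conflicting gadgets from simultaneously being in the sink-component. Combining with the known inapproximability of the base problem yields the claimed $n^{1-\epsilon}$ hardness; since the reduction outputs an instance with filtering lists of cardinality at most one, the theorem follows.

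\textbf{The main obstacle} I anticipate is guaranteeing property~(a) --- that a stable solution \emph{always} exists --- simultaneously with a clean encoding, because next-hop preferences plus a single forbidden node can already create ``dispute wheel''-type configurations with no equilibrium, and the theorem explicitly restricts attention to instances that do reach equilibrium. So the delicate part of the construction is to arrange the preference orderings and the single-element filters so that conflicts are resolved \emph{statically} (every potential cycle is killed by some node's filter evaluating to the empty path) rather than producing an oscillation; verifying that every subset-of-gadgets ``selection'' that respects the filters actually extends to a globally consistent equilibrium, and that no selection violating the filters can, is where the real work lies. A secondary but routine concern is bookkeeping the blow-up so that the final parameter is $n^{1-\epsilon}$ for \emph{every} constant $\epsilon>0$ rather than for some fixed one; this should follow by taking the number of parallel copies to be a sufficiently large polynomial in the base instance size.
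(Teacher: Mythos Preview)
Your high-level plan is plausible but differs substantially from the paper's route, and it has a concrete gap you have not addressed.

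The paper does \emph{not} start from an approximation-hard source. It reduces from the decision problem $3$SAT and manufactures the $n^{1-\epsilon}$ gap entirely by padding: the gadget graph has $J=4N+5M+2$ ``real'' nodes (Variable-Gadgets chained to Clause-Gadgets chained to the sink), and then $L=J^{1/\epsilon}-J$ padding leaves are hung off the last clause node $s_M$. If the formula is satisfiable, \emph{all} $n=J^{1/\epsilon}$ nodes (including padding) sit in a stable tree; if not, some $s_j$ is disconnected, hence $s_M$ is, hence every padding node is, and the stable tree has at most $J=n^{\epsilon}$ nodes. The single-element filters are used as follows: the Variable/Clause gadgets are strung in a line, so the routing path from any clause node to $r$ passes through \emph{every} variable gadget and records the assignment; each clause $C_j$ then has three checker nodes $q_{1,j},q_{2,j},q_{3,j}$, and $q_{z,j}$'s single filter is the ``wrong'' literal node $u^T_{i(z)}$ or $u^F_{i(z)}$. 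Equilibrium existence falls out automatically because every node always has a valid fallback (through the dummy sink $d_0$), so your property~(a) is free here rather than the main obstacle.

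The real gap in your plan is the step you gloss over: encoding the combinatorial constraint with a \emph{single} forbidden node per vertex. For \textsc{Independent Set}, a vertex $v$ of $H$ may have unbounded degree, so its gadget must exclude routing whenever \emph{any} of its many neighbours' gadgets is active; a size-one filter on the decision node can block at most one of them. You would need either a per-edge sub-gadget (so that each conflict is witnessed by a separate node with its own single filter) together with an argument that the stable-tree size still tracks the independent-set size after this blow-up, or a switch to a bounded-degree source problem (which weakens the starting gap and forces more amplification bookkeeping). Neither is impossible, but it is exactly the ``delicate part'' of the construction, and you have not sketched it. The paper sidesteps this entirely because $3$SAT clauses have arity three, so three checker nodes with one filter each suffice, and the chain layout makes each filter test a single variable. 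If you want to pursue your route, the first thing to nail down is a concrete per-edge gadget with size-one filters and a proof that it composes; otherwise, the $3$SAT-plus-padding reduction is both simpler and tighter.
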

\begin{corollary}
\label{cor:general} 
For filtering lists of cardinality at most one, given a network in which an equilibrium is guaranteed to exist, it is NP-hard 
to approximate the maximum number of packets that can be routed to 
within a factor of $n^{1-\epsilon}$, for any constant $\epsilon>0$.
\end{corollary}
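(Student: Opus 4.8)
The plan is to obtain Corollary~\ref{cor:general} from the reduction that proves Theorem~\ref{thm:general}, by arguing that on the instances it outputs the maximum number of packets that any execution of the protocol can deliver equals, up to a modest factor, the size of a maximum cardinality stable subtree. Since Theorem~\ref{thm:general} forbids an $n^{1-\epsilon}$-approximation for the latter, and the instances can be arranged to admit an equilibrium (so the hypothesis of the corollary is met), the same inapproximability transfers. Two things must be checked: a lower bound, that some fair schedule delivers at least as many packets as a maximum stable subtree contains; and an upper bound, that no schedule delivers substantially more.

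For the lower bound, fix a maximum stable subtree $T$, $|T|=k^\ast$, realized as the sink-component of a stable solution, and start from the all-empty routing graph. I would choose the permutations $\pi_t$ so that in each round the nodes of $T$ are activated first, in non-decreasing order of their distance to $r$ in $T$. When such a node $v$ is activated, the portion of its intended path between $v$ and $r$ in $T$ already consists of nodes holding their stable routing paths, so $v$'s stable next-hop is a valid choice; once every node of $T$ holds its stable path, re-activation changes nothing because the solution is an equilibrium. Hence from the second round on the sink-component of $\calR_t$ contains $T$, and in the forwarding phase every node of $T$ delivers its packet. For this to work the target stable solution must be reachable in this fashion, so I would design the reduction so that its ``interesting'' stable solution has exactly this property (e.g.\ so that the instance is safe with an essentially unique stable solution).

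For the upper bound I would show that on these instances every routing graph $\calR_t$ arising during any execution has a sink-component of size $O(k^\ast)$. Since a packet born at $v$ reaches $r$ in round $t$ only when $v$ lies in the sink-component of $\calR_t$ (the forwarding phase follows the arcs of $\calR_t$, whose sink-component is acyclic, so the $n$-hop cap never truncates a legitimate delivery), this bounds the number of packets ever delivered. Proving the claim is a structural statement about the gadgets: because of the route-verification step, any node that can currently reach $r$ carries a \emph{consistent} routing path, and I would build the gadgets so that a family of gadget-nodes can simultaneously carry consistent paths reaching $r$ only when the induced next-hop choices form (part of) a stable configuration --- so a large transient sink-component would itself certify a large stable subtree.

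I expect the upper bound to be the real obstacle: it does not suffice to reduce the static maximum-stable-subtree problem; the construction must be \emph{robust to the forwarding plane}, so that packets cannot be delivered ``by accident'' in routing graphs far from any equilibrium. This is a genuine constraint, since the positive results of this paper show that in other preference classes transient routing can deliver far more packets than stability alone; the reduction behind Theorem~\ref{thm:general} must therefore be engineered specifically to preclude such behaviour within the class of filtering lists of cardinality one.
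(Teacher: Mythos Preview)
Your plan is sound and, in fact, more careful than what the paper itself does. The paper gives no separate proof of Corollary~\ref{cor:general}; it states it alongside Theorem~\ref{thm:general} and, in Section~\ref{sec:general}, remarks that ``the proof is based solely upon a control plane hardness result'' before proving only Theorem~\ref{thm:max-stable}. Implicitly the claim is that (i)~the constructed instance always admits an equilibrium (fix any assignment in the variable gadgets and propagate; in the {\sc No}-case the padding nodes then take the empty path), and (ii)~the number of packets that can be routed coincides with the size of a maximum stable subtree. Your lower-bound argument for~(ii) is fine and the paper does not bother to spell it out.

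Your caution about the upper bound is warranted---as you note, the positive results in Sections~\ref{sec:anything-goes}--\ref{sec:not-me} show that dynamics can route packets that no equilibrium does, so robustness of the gadgets to the forwarding plane is a real requirement. One correction to your sketch: route-verification restores consistency only at the \emph{end} of each round, so mid-round a node may well carry an inconsistent $\calP(\cdot)$; you cannot appeal to consistency directly. What actually makes the paper's construction robust is a rigidity property of the variable gadgets: once $H(x_i)$ enters one of its two configurations it never leaves it under any activation order (the self-filters $\calD(v)=\{v\}$ on $a_i,u^T_i,u^F_i$ lock it), so after at most a round or two a single assignment $A$ is encoded along the spine forever. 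Now whenever a padding node $d_i$ could select $s_M$, the path $\calP(s_M)$ must avoid $d_0$ and hence traverse the entire spine; but an arc $(q_{z_j,j},t_j)$ can sit on such a path only if its suffix from $t_j$---which encodes $A$---avoids $q_{z_j,j}$'s forbidden literal node, i.e., $A$ satisfies $C_j$. In the {\sc No}-instance this fails for some clause, so $\calP(s_M)$ is always empty or contains $d_0$, the padding nodes never select $s_M$, and their $L=n-n^{\epsilon}$ packets never leave their origin. That is the upper-bound argument you were looking for; the paper simply does not write it out.
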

 
However, for its natural subcase where
the filtering list of a node consists only of itself (that is, a node does not want
to route via a cycle!), we obtain a positive result in Section~\ref{sec:not-me}.
\begin{theorem} 
\label{thm:any}
If the filtering list of a node consists only of itself, then an equilibrium can be obtained in $n$
rounds. Moreover, every packet will be routed in $\frac{n}{3}$ rounds, that is, before stability is obtained!
\end{theorem}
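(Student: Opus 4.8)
The plan is to exploit the rigid structure that next-hop preferences force on equilibria in the \texttt{Not me!} model, and to choose the scheduler $\mathbb{A}$ by hand. First I would establish that an equilibrium always exists and is a \emph{spanning} in-arborescence $T$ rooted at $r$, built greedily: starting from $S=\{r\}$, repeatedly attach an outside node $v$ by the arc $(v,w)$ where $w$ is the out-neighbour of $v$ ranked highest \emph{among the nodes already in} $S$. Such a $v$ always exists: if for every outside node the best in-$S$ out-neighbour were beaten by some out-neighbour $x\notin S$, then the equilibrium condition would force $\calP(x)$ to pass through $v$ (that being the only reason $v$ rejects the preferred $x$), yet $\calP(x)$ stays inside $S$ — a contradiction. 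The construction simultaneously produces a linear order $r=u_0,u_1,\dots,u_{n-1}$ in which the $T$-parent of $u_i$ is its highest-ranked out-neighbour inside $\{u_0,\dots,u_{i-1}\}$ and every prefix induces a sub-arborescence of $T$.

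For the equilibrium claim I would run the scheduler that activates the nodes in this order $u_1,\dots,u_{n-1}$ every round and prove, by induction on $k$, that after round $k$ a set $F_{k}\supseteq F_{k-1}$ of at least $k$ non-sink nodes is \emph{frozen} — its members' routing paths never change again — with $F_k\cup\{r\}$ a sub-arborescence of some equilibrium. In the inductive step, once $F_{k-1}$ is frozen and $V\setminus(F_{k-1}\cup\{r\})$ is non-empty, the first node activated that still has an out-neighbour $w\in F_{k-1}\cup\{r\}$ adopts $w$; every out-neighbour it prefers to $w$ lies outside $F_{k-1}\cup\{r\}$ and can never again offer a route to $r$ that avoids it, so that node is pinned to a (frozen) choice forever. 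As $|V|=n$, an equilibrium is reached within $n-1<n$ rounds.

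For the routing claim, the key observation is that a packet located at a node $v$ at the start of round $t$ is delivered during round $t$ as soon as $v$ lies in the sink-component of $\calR_t$, because any simple path has fewer than $n$ arcs while the forwarding phase allows $n$ hops. Hence it suffices that the frozen sub-arborescence rooted at $r$ — which from then on is always contained in the sink-component — grows by \emph{three} nodes per round: after $\lceil n/3\rceil$ rounds it spans $V$, so $\calR_t$ is a spanning arborescence and every still-undelivered packet arrives that round (and each packet in fact arrives the first round in which its current position is swallowed by this core). Adding three frozen nodes per round I would achieve by using the ordering freedom together with the greedy structure above — activating a short chain of out-neighbours running from a not-yet-frozen node down into the current core, breaking a best-choice cycle when one is present — the value three being the granularity at which one can certify such an extension without disturbing the already-frozen part. (With more care the round-one ordering can be chosen so that $\calR_1$ is already spanning: a short argument shows no cycle can survive a round, since whenever a node adopts an out-neighbour that out-neighbour's current route avoids the node and this is preserved through the round, which would deliver every packet already in round $1\le n/3$ for $n\ge 3$; I would present the milder $n/3$ bound if it meshes more cleanly with the equilibrium argument.)

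The delicate point throughout, and the main obstacle, is the \emph{arbitrary} initial routing graph $\calR_0$. With $\calR_0$ empty everything above is immediate — indeed everything freezes after one round — but an adversarial $\calR_0$ can leave a not-yet-frozen node $u_j$ transiently holding a route to $r$ that avoids some $u_k$ with $k<j$, making $u_j$ momentarily a valid, highly ranked choice for $u_k$ and threatening both the freezing and the core-growth arguments. Neutralising these transients — either by showing any such ghost route is destroyed within a round, or by spending a constant number of preliminary rounds to draw every node into the sink-component before the freezing/growth machinery is invoked — is where the real work lies.
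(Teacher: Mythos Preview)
Your proposal has a genuine gap at precisely the point you flag as ``delicate.'' The freezing step hinges on the claim that once a not-yet-frozen node $v$ has a best in-core neighbour $w$, every out-neighbour $x$ it prefers to $w$ ``can never again offer a route to $r$ that avoids $v$.'' This is simply false: $x$ may later acquire (or already have) a route to $r$ lying entirely outside $v$'s subtree, and then $v$ will switch to $x$. Your parenthetical escape hatch --- that ``no cycle can survive a round'' because ``whenever a node adopts an out-neighbour that out-neighbour's current route avoids the node and this is preserved through the round'' --- is also false. Take $a,b,c$ with first choices $a\!\to\! b$, $b\!\to\! c$, $c\!\to\! a$ and second choices all to $r$; from the initial state where everyone points to $r$, activating in the order $a,b,c$ creates the cycle $a\!\to\! b\!\to\! c\!\to\! a$, and with this fixed order the system oscillates forever. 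So a fixed activation order derived from a single pre-computed equilibrium does not, by itself, yield convergence, and the $+3$-per-round growth you need for the $n/3$ bound is never argued at all beyond ``using the ordering freedom\ldots breaking a best-choice cycle when one is present.'' (Your equilibrium-existence argument is also circular as written: you invoke ``the equilibrium condition'' on $\calP(x)$ to justify the greedy step that is supposed to construct the equilibrium.)

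The paper's proof is structurally quite different. It does \emph{not} fix an equilibrium in advance or use a fixed permutation. Instead it maintains, round by round, a spanning tree $S_t$ satisfying a \emph{strong stability} property on the growing set $\mathbb{O}_t$ of nodes that have ever been opaque: each $v\in\mathbb{O}_t$ prefers its $S_t$-parent to every node outside its $\mathbb{O}_t$-subtree. The activation sequence in round $t$ is BFS$(\mathbb{O}_t,S_t)$ followed by reverse-BFS$(\mathbb{K}_t,S_t)$; a careful induction (Lemma~\ref{lmm:main-props}) shows this forces the routing graph to agree with $S_t$ on $\mathbb{O}_t$, after which one extra node $v^*$ is absorbed. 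The $+3$ bound (Lemma~\ref{lem:stab-size-inc}) is not obtained by ``adding a chain'' but by a structural argument: if a packet fails to route, the routing graph contains a cycle, and since the arcs on $\mathbb{O}_{t-1}$ already form a forest, any such cycle must pick up at least two \emph{new} opaque nodes outside $\mathbb{O}_{t-1}$, which together with $v^*$ gives three. The machinery you would need to repair your approach --- controlling what the ``ghost routes'' look like after one round and maintaining an invariant that survives activation of the non-frozen nodes --- is essentially what the strong-stability/skeleton framework provides, and it is substantially more than a constant number of preliminary clean-up rounds.
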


Interestingly, we can route every packet in the case $\calD(v)=\emptyset$ for all $v\in V$; see Section~\ref{sec:anything-goes}. 
Thus, even if nodes do not care whether their packets cycle, the packets still get through!
\begin{theorem} 
\label{thm:not-me}
If the filtering list is empty then every packet can be routed in $4$ rounds, 
even when the network has no equilibrium.
\end{theorem}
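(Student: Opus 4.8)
The plan is to exhibit four permutations $\pi_1,\dots,\pi_4$ under which the protocol routes every packet. Write $S_t$ for the node set of the sink-component of $\calR_t$; by the route-verification step $S_t$ is exactly the set of clear nodes at the start of round $t+1$, and since a directed path inside $\calR_t$ has fewer than $n$ arcs, a packet sitting at node $v$ at the start of round $t$'s forwarding phase reaches $r$ in that round if and only if $v\in S_t$. Hence it suffices to show that each packet's location lies in $S_t$ for some $t\le 4$. The basic building block is: \emph{if at the start of a round every node other than $r$ is opaque, then activating the nodes in order of non-decreasing distance to $r$ in $G$ turns $\calR_t$ into a spanning in-arborescence rooted at $r$, so every packet routes that round.} Its proof is one induction along the activation order: filtering lists are empty, so a choice is valid precisely when it is clear; when a node $v$ is activated its clear out-neighbours are only $r$ and nodes already activated that round, each of which (inductively) already carries a genuine directed path to $r$ in the current routing graph, so $v$ links ``backwards'', gains a genuine path, and closes no cycle; non-decreasing-distance order guarantees $v$ always has a clear out-neighbour when activated (its neighbour one hop closer to $r$), so no node is left opaque.

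Since $\calR_0$ is arbitrary and the all-opaque configuration cannot in general be forced -- already two nodes $u,v$ each ranking the other first will, whenever both are clear at the start of a round, form a $2$-cycle that round regardless of the schedule and then turn opaque -- I would spend the first three rounds ``cleaning up'', each using the non-decreasing-distance order, and track the bad set $B_t := V\setminus S_t$. Two structural facts drive the clean-up. First, in any round every directed cycle of the resulting routing graph contains a node of $S_{t-1}$: walking around such a cycle, the out-arc of its first-activated node points to a node not yet activated that round, which is clear only if it already lay in $S_{t-1}$. Second, after any round every not-yet-delivered packet sits on a cycle of $\calR_t$ -- it runs down the ``tail'' of its component and then loops -- hence at a node of $B_t$, which is opaque when the next round begins.

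Combining these, the core of the argument is a progress claim: in round $t$, activating the nodes of $B_{t-1}$ (in non-decreasing-distance order) before those of $S_{t-1}$ re-attaches every node of $B_{t-1}$ to the sink, while the newly created defects are confined to $2$-cycles on mutually-top-ranked pairs inside $S_{t-1}$ -- and, crucially, only pairs that were ``half-asleep'' (not both clear) one round earlier, each of which self-destructs once and is then repaired into a genuine path, carrying along any packet it trapped. The accounting then shows that after three clean-up rounds the system reaches a configuration to which the building block applies; taking $\pi_1,\pi_2,\pi_3$ to be the clean-up schedules and $\pi_4$ the building-block schedule yields delivery of every packet by round $4$ (the routing graph produced need not be an equilibrium, consistently with the theorem). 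The main obstacle is precisely this progress claim: bounding the cascade of cycles seeded by ``stale but clear'' nodes -- nodes of $S_{t-1}$ whose actual path to $r$ is severed by an earlier activation in the same round -- and proving the cascade has depth at most three, rather than merely $O(1)$ or $O(\log n)$. That bound is where the constant $4$ is earned, and it relies on scheduling $S_{t-1}$ after $B_{t-1}$ (in an order keeping most of its in-tree intact) together with the cycle-structure fact above.
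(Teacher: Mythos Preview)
Your proposal has a genuine gap, and it sits exactly where you suspect: the ``progress claim.'' The specific assertion that the new defects after a clean-up round are ``confined to $2$-cycles on mutually-top-ranked pairs'' is false. After you activate $B_{t-1}$ and then $S_{t-1}$, every node of $S_{t-1}$ sees all nodes clear and therefore selects its \emph{first-choice} out-neighbour. The cycles that result are precisely the first-choice cycles lying entirely inside $S_{t-1}$, and these can have any length: three nodes $a,b,c$ with first choices $a\to b\to c\to a$ (and second choices to $r$) will form a $3$-cycle, not a $2$-cycle. More generally, there is no reason the cascade depth should be bounded by $3$ under your schedule; your own example of a mutually-top-ranked pair already shows that such a pair, once both nodes are clear, will oscillate forever between ``both in the sink-component'' never being reached and a $2$-cycle forming --- so the ``building block'' hypothesis (every non-sink node opaque) is never attained, and your round-$4$ step cannot fire. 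The packet may still route, but not by the mechanism you describe.

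The paper's proof takes a rather different route, and the key structural object you are missing is the \emph{first-class network} $F=(V,A_1)$, where $A_1$ is the set of first-choice arcs. Each component $F_j$ of $F$ is a $1$-arborescence whose unique cycle $C_j$ is the persistent obstacle. The paper designs a colouring procedure \textsc{Coordinate}$(\calK_t)$ that outputs a red/blue partition $(R,B)$ with two guarantees: (i) every $F_j$ is monochromatic, and (ii) $R$ becomes exactly the sink-component under the induced activation order. Running this four times, one shows that an undelivered packet is forced successively into $B_1$, then $R_1$, then $R_2$; a monotonicity lemma (comparing $R_2$ with $R_4$ via the containment $\calK_4=R_3\subseteq B_2$) then gives $R_2\subseteq R_4$, so the packet routes in round $4$. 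The constant $4$ thus comes from this alternation-plus-monotonicity argument on the first-class components, not from a bound on cascade depth under a BFS schedule. Your cycle-structure fact (every cycle in $\calR_t$ meets $S_{t-1}$) is correct and is morally related, but it is not strong enough on its own; you need control over \emph{which} first-class cycles get trapped, and that is what the \textsc{Coordinate} colouring provides.
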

Theorems~\ref{thm:any} and~\ref{thm:not-me} are the first theoretical results showing that
packet routing can be done in the absence of stability. 
For example, every packet will be routed even in the presence of {\em dispute wheels}~\cite{GSW02}.
Indeed, packets will be routed even if some nodes {\em never} actually have paths
to the sink.
Note that when we say that every packet will route with probability one we mean that, assuming 
permutations are drawn at random, we will eventually 
get a fair activation sequence that routes every packet. 
It is a nice open problem to obtain high probability guarantees
for fast packet routing under such an assumption.


\section{General Filtering.}\label{sec:general}

Here we consider hardness results for packet routing with general filtering lists.
As discussed, traditionally the theory community has focused upon the stability of $\calR$ -- the routing graph is stable 
if every node is selecting their best valid neighbour (and is consistent). For example, 
there are numerous intractability results regarding whether a network has an equilibrium; e.g., see~\cite{GW99,FP08}.
However, notice that the routing graph may be stable even if it is not spanning!
There may be singleton nodes that prefer to stay disconnected rather than
take any of the offered routes. Thus, regardless of issues such as existence and
convergence, an equilibrium may not even route the packets.
This can be particularly problematic when the nodes use filters. Consider our problem of
maximising the number of nodes that can route packets successfully.
We show that this
cannot be approximated to within a factor of $n^{1-\epsilon}$, for any $\epsilon>0$ unless $\mathrm{P}=\mathrm{NP}$. 
The proof is based solely upon a control plane hardness result: it is NP-hard to approximate the maximum cardinality stable tree to within a factor of $n^{1-\epsilon}$.
Thus, even if equilibria exist, it is hard to determine if there is one in which the {\em sink-component} (the component of $\calR$ containing the sink) is large.

Formally, in the {\em maximum cardinality stable tree problem}, we are given a directed graph $G=(V,E)$ and a sink node $r$; each node $v\in V$ has a ranking of its neighbours and has a filtering list $\calD(v)$. 
Given a tree (arborescence) $T\subseteq G$, we say that a node $v$ is {\em valid} for a node $u$ if $(u,v)\in E$ and a $v,r$-path in $T$ does not contain any node of $\calD(v)$. 
We say that $T$ is {\em stable} if, for every arc $(u,v)$ of $T$, we have that $v$ is valid for $u$, and $u$ prefers $v$ to any of its neighbours in $G$ that are valid for $u$ (w.r.t. $T$).
Our goal is to find a stable tree (sink-component) with the maximum number of nodes.
%
We will show that even when $|\calD(v)|=1$ for all nodes $v\in V$, the maximum-size stable tree problem cannot be approximated to within a factor of $n^{1-\epsilon}$, for any constant $\epsilon>0$,
unless $\mathrm{P}=\mathrm{NP}$.

The proof is based on the hardness of $3$SAT~\cite{Karp72}: given a CNF-formula on $N$ variables and $M$ clauses, it is NP-hard to determine whether there is an assignment satisfying all the clauses.
Take an instance of $3$SAT with $N$ variables,
$x_1,x_2,\ldots,x_N$ and $M$ clauses $C_1,C_2,\ldots,C_M$.
We now create a network $G=(V,A)$ using the following gadgets: 
\begin{itemize}
\item {\tt Variable-Gadget:}
  For each variable $x_i$, we have a gadget $H(x_i)$ with 
  four nodes $a_i,u^T_i,u^F_i,b_i$.
  The nodes $u^T_i$ and $u^F_i$ have first-choice arcs
  $(u^T_i,a_i)$, $(u^F_i,a_i)$ and second-choice arcs
  $(u^T_i,b_i)$, $(u^F_i,b_i)$. 
  The node $a_i$ has two arcs $(a_i,u^T_i)$ and $(a_i,u^F_i)$; the
  ranking of these arcs can be arbitrary.
  Each node in this gadget has itself in the filtering list, i.e.,
  $\calD(v)=\{v\}$ for all nodes $v$ in $H(x_i)$.
\item {\tt Clause-Gadget:}
  For each clause $C_j$ with three variables
  $x_{i(1)},x_{i(2)},x_{i(3)}$, we have a gadget $Q(C_j)$.
  The gadget $Q(C_j)$ has four nodes 
  $s_j,q_{1,j},q_{2,j},q_{3,j},t_j$.
  The nodes $q_{1,j}$, $q_{2,j}$, $q_{3,j}$ have first-choice arcs
  $(q_{1,j},t_j)$, $(q_{2,j},t_j)$, $(q_{3,j},t_j)$. 
  The node $s_j$ has three arcs $(s_j,q_{1,j})$, $(s_j,q_{2,j})$, 
  $(s_j,q_{3,j})$;
  the ranking of these arcs can be arbitrary, so we may assume that
  $(s_j,q_{z,j})$ is a $z$th-choice arc.
  Define the filtering list of $s_j$ and $t_j$ as 
  $\calD(s_j)=\{s_j\}$ and $\calD(t_j)=\{d_0\}$. 
  (The node $d_0$ will be defined later.)
  For $z=1,2,3$, let $u^T_{i(z)}$ and $u^F_{i(z)}$ be nodes in the corresponding Variable-Gadget $H(x_{i(z)})$. 
  The node $q_{z,j}$ has a filtering list $\calD(q_{z,j})=\{u^T_{i(z)}\}$,
  if assigning $x_{i(z)}=\mathsf{False}$ satisfies the clause $C_j$;
  otherwise, $\calD(q_{z,j})=\{u^F_{i(z)}\}$.  
\end{itemize}

To build $G$, we first add a sink node $r$ and a {\em dummy ``sink"} $d_0$. We then connect $d_0$ to $r$ by a first-choice arc $(d_0,r)$. 
We number the Variable-Gadgets and Clause-Gadgets in any order. 
Then we add a first-choice arc from the node $a_1$ of the first Variable-Gadget $H(x_1)$ to the sink $r$. 
For $i=2,3,\ldots,N$, we add a first-choice arc $(b_i,a_{i-1})$ joining gadgets $H(x_{i-1})$ and $H(x_i)$.
We join the last Variable-Gadget $H(x_N)$ and the first Clause-Gadget $Q(C_1)$ by a first-choice arc $(t_1,a_N)$.
For $j=2,3,\ldots,M$, we add a first-choice arc $(t_j,s_{j-1})$ joining gadgets $Q(C_{j-1})$ and $Q(C_j)$.
This forms a line of gadgets. 
Then, for each node $q_{z,j}$ of each Clause-Gadget $Q(C_j)$, we add a second-choice arc $(q_{z,j},d_0)$ joining $q_{z,j}$ to the dummy sink $d_0$.
Finally, we add $L$ {\em padding} nodes $d_1,d_2,\ldots,d_L$ and join each node $d_i$, for $i=1,2,\ldots,L$, to the last 
Clause-Gadget $Q(C_M)$ by a first-choice arc $(d_i,s_M)$; the filtering list of each node $d_i$ is $\calD(d_i)=\{d_0\}$, for all $i=0,1,\ldots,L$.
The parameter $L$ can be any positive integer depending on a given parameter.
Observe that the number of nodes in the graph $G$ is $4N+5M+L+2$, and $|\calD(v)|=1$ for all nodes $v$ of $G$. 
The reduction is illustrated in Figure~\ref{fig:hardness}. 

\begin{figure}[h!]
\begin{center}
  \fbox{\begin{minipage}{\textwidth}
      \centerline{ \includegraphics[scale=0.6] {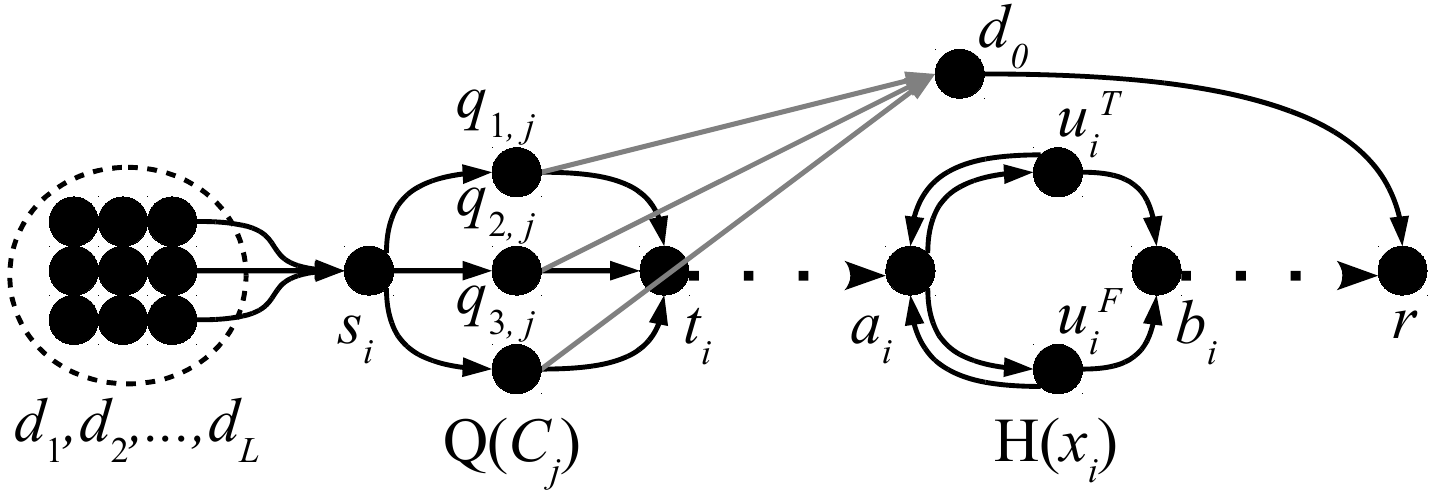}}
      \caption[The hardness construction]{
      The hardness construction. 
      }
      \label{fig:hardness}
    \end{minipage}}
\end{center}
\end{figure}

\medskip

The correctness of the reduction is proven in the next theorem.

\begin{theorem} \label{thm:max-stable}
For any constant $\epsilon>0$, given an instance of the maximum-size stable tree problems with a directed graph $G$ on $n$ nodes 
and filtering lists of cardinality $|\calD(v)|=1$ for all nodes $v$, it is NP-hard to distinguish between the following two cases 
of the maximum cardinality stable tree problem.
\squishlist
\item {\sc Yes-Instance:} The graph $G$ has a stable tree spanning all the nodes.
\item {\sc No-Instance:} The graph $G$ has no stable tree spanning $n^\epsilon$ nodes. 
\squishend
\end{theorem}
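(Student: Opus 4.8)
The plan is to turn the reduction just described into the stated gap; NP-hardness of distinguishing the two cases then follows from that of $3$SAT. Concretely, I will prove (a) if the formula is satisfiable then $G$ has a stable tree containing all $n=4N+5M+L+2$ nodes; (b) if it is unsatisfiable then every stable tree of $G$ has at most $n_0:=4N+5M+2$ nodes; and (c) choosing $L:=\lceil n_0^{2/\epsilon}\rceil$ — polynomial in the input size for constant $\epsilon$ — gives $n_0<n^{\epsilon}$, so that (a)--(c) are exactly the {\sc Yes}/{\sc No} dichotomy (and the {\sc No} case is non-vacuous since a stable tree always exists: fix the variable gadgets arbitrarily and let every other node follow its best valid arc). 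All the content is in analysing an arbitrary stable tree $T$; below $\calP_T(v)$ denotes the $v,r$-path in $T$.

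Step~1 is a \emph{variable-gadget lemma}. In any stable tree $T$ the arc $(a_1,r)$ is used, and then inductively every ``chain'' node is clear. Inside each variable gadget, acyclicity of $T$ together with the self-filters $\calD(v)=\{v\}$ pins the local picture down to exactly one of two configurations: $a_i$ points to one of $u^T_i,u^F_i$; that chosen node cannot also point back to $a_i$ (this would close a $2$-cycle, and would anyway put $a_i$ on a path it filters), so it is forced onto its second-choice arc toward $b_i$ and onto the ``backbone'', while the other $u$-node points to $a_i$. A short case check shows both configurations are stable no matter how $a_i$ ranks $u^T_i$ against $u^F_i$, so I may read off a truth assignment from $T$ by declaring $x_i$ to be $\mathsf{True}$ precisely when $u^T_i$ lies on the backbone $B:=\calP_T(t_1)=t_1,a_N,u^{\bullet}_N,b_N,\dots,a_1,r$; conversely every assignment is realised by some stable tree.

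Step~2 is a \emph{backbone lemma}. One checks that $B$ extends upward through every clause gadget as $\dots,t_{j+1},s_j,q_{\bullet,j},t_j,\dots$, so that $\calP_T(t_j)$ contains, for every variable $x_i$, the on-backbone member of $\{u^T_i,u^F_i\}$ and no off-backbone member. Since $\calD(q_{z,j})$ is by construction exactly the off-backbone node that codes ``the $z$-th literal of $C_j$ is satisfied,'' the node $q_{z,j}$ can relay through $t_j$ iff that literal holds under the assignment read from $T$ and $t_j$ is itself clear and $d_0$-free; otherwise $q_{z,j}$ is forced onto its second-choice arc to $d_0$, putting $d_0$ on $\calP_T(q_{z,j})$. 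Calling $Q(C_j)$ \emph{good} when $s_j$ is clear with $d_0\notin\calP_T(s_j)$, the clause-gadget filters together with the $\{d_0\}$-filter carried by each $t_j$ and each padding node yield the recursion: $Q(C_j)$ is good iff $Q(C_{j-1})$ is good and some literal of $C_j$ is satisfied, with base case $j=1$ always good because the arc $t_1\to a_N$ keeps $B$ clear of $d_0$. Iterating, $Q(C_M)$ is good — equivalently every padding node $d_k$ is clear, since $d_k\to s_M$ and $\calD(d_k)=\{d_0\}$ — iff the assignment satisfies all $M$ clauses.

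With the two lemmas in hand, (a) and (b) follow. For (a): realise a satisfying assignment in the variable gadgets (legal by Step~1); by Step~2 every clause gadget is good, hence all $L$ padding nodes route, and a routine check that each remaining node (the $u$'s, the $b$'s, the $s$'s, the relaying and the non-relaying $q$'s, the $t$'s, and $d_0$) is also taking its best valid arc shows this routing graph is a stable tree on all $n$ nodes. For (b): in an unsatisfiable instance, the assignment read off any stable tree falsifies some clause, so $Q(C_M)$ is not good, no $d_k$ is clear, and the tree misses all $L$ padding nodes, leaving at most $n_0<n^{\epsilon}$. I expect the main obstacle to be the forward-direction stability check: one has to simultaneously honour every node's next-hop ranking (so each really does take its top valid neighbour), keep the tree acyclic, and control the interaction of the self-filters inside the variable gadgets with the $\{d_0\}$-filters running along the clause chain — with the mirror-image difficulty, in Step~1, of excluding ``exotic'' stable configurations inside a gadget, which is exactly the place the acyclicity-plus-self-filter argument must be airtight.
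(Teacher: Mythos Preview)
Your proposal is correct and follows essentially the same approach as the paper: the same 3SAT reduction, the same two-configuration variable-gadget analysis, and the same inductive propagation through the clause chain (your ``good'' recursion is exactly the paper's argument that $s_j$ has a $d_0$-free path to $r$ iff every clause up through $C_j$ is satisfied under the assignment read from the variable gadgets). The only cosmetic difference is the padding parameter --- you take $L=\lceil n_0^{2/\epsilon}\rceil$ whereas the paper takes $L=n_0^{1/\epsilon}-n_0$ so that $n=n_0^{1/\epsilon}$ exactly --- but both choices produce the required $n$ versus $n^{\epsilon}$ gap.
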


\begin{proof}
We apply the above reduction from $3$SAT with a parameter $L=J^{1/\epsilon}-J$, where $J=4n+5m+2$.
Thus, the graph $G$ has $n=J^{1/\epsilon}$ nodes and has $n^\epsilon=J$ non-padding nodes.

First, we show that there is a one-to-one mapping between choices of each Variable-Gadget $H(x_i)$ and an assignment of $x_i$.
Consider any Variable-Gadget $H(x_i)$.
To connect to the next gadget, nodes $u^T_i$ and $u^F_i$ of $H(x_i)$ must choose at least one second-choice arc.
However, in a stable tree, they cannot choose both second-choice arcs
$(u^T_i,b_i)$ and $(u^F_i,b_i)$; otherwise,
$u^T_i$ or $u^F_i$ would prefer to choose the node $a_i$.  
Thus, the gadget $G(x_i)$ must choose either arcs
\begin{center}
(1)~$(u^T_i,b_i), (u^F_i,a_i), (a_i,u^T_i)$ \quad or \quad 
(2)~$(u^F_i,b_i), (u^T_i,a_i), (a_i,u^F_i)$.
\end{center}
These two cases correspond to the the assignments
$x_i=\mathsf{True}$ and $x_i=\mathsf{False}$, respectively.
Thus, there is a one-to-one mapping between the choices of gadget $H(x_i)$
in the stable tree and the assignment of $x_i$. 
We refer to each of these two alternatives as an assignment of $x_i$. 

Now, we prove the correctness of the reduction.

\medskip

\noindent {\sc Yes-Instance:}
Suppose there is an assignment satisfying all the clauses. 
Then there is a stable tree $T$ corresponding to such an assignment. To see this, within Variable-Gadget
we select arcs in accordance with the assignment as detailed above.
We also choose the arc $(d_0,r)$ and all the horizontal arcs connecting adjacent gadgets in the line (or from the first Variable-Gadget to the sink $r$).
For each Clause-Gadget $Q(C_j)$ and each $z=1,2,3$, we choose the first-choice arc $(q_{z,j}, t_j)$ if the assignment to $x_{i(z)}$ 
satisfies $C_j$; otherwise, we choose the second-choice arc $(q_{z,j}, d_0)$.
For the node $s_j$ of $Q(C_j)$, we choose an arc $(s_j,q_{z,j})$, where $z$ is the smallest number such that the assignment to $x_{i(z)}$ 
satisfies $C_j$ (i.e., $q_{z,j}$ chooses $t_j$); since the given assignment satisfies all the clauses, $s_j$ has at least one valid choice. 
Now, we have that the node $s_M$ of the last Clause-Gadget $Q(C_j)$ has a path $\calP(s_M)$ to the sink $r$ that does not contain the dummy sink $d_0$. 
Thus, every padding node can choose $s_M$ and, therefore, is in the stable tree $T$. This implies that $T$ spans all the nodes.

\medskip

\noindent {\sc No-Instance:}
Suppose there is no assignment satisfying all the clauses. 
Let $T$ be any stable tree of $G$. 
As in the previous discussion, the choices of nodes in Variable-Gadgets correspond to the assignment of variables of $3$SAT.

Consider any Clause-Gadget $Q(C_j)$. 
Since $\calD(s_j)=\{d_0\}$, the node $s_j$ of $Q(C_j)$ has a path to the sink $r$ only if
\begin{enumerate}[(1)]
\item a $t_j,r$-path $\calP(t_j)$ in $T$ does not contain the dummy sink $d_0$, and 
\item one of $q_{1,j},q_{2,j},q_{3,j}$ chooses $t_j$.
\end{enumerate}

These two conditions hold only if $T$ corresponds to an assignment satisfying $C_j$.
To see this, suppose the first condition holds.
Then $\calP(t_j)$ has to visit either $v^F_i$ or $v^T_i$ of every Variable-Gadget $H(x_i)$, depending on the assignment of $x_i$. 
Thus, by the construction of $\calD(q_{z,j})$, $t_j$ is valid for $q_{z,j}$ only if the assignment to $x_{i(z)}$ satisfies $C_j$.
Since there is no assignment satisfying all the clauses, a node $s_{\ell}$ of some Clause-Gadget $Q(C_\ell)$ is not in $T$.
This means that nodes in the remaining Clause-Gadget have to use the dummy sink $d_0$ to connect to the sink $r$. 
Thus, the node $s_M$ of the last Clause-Gadget $Q(C_M)$ is not in $T$ and neither are any of the padding nodes $d_1,d_2,\ldots,d_L$. 
Therefore, the size of $T$ is at most $J=n^\epsilon$, proving the theorem.
\end{proof}

Observe this means that, from the perspective of the nodes, it is NP-hard to determine whether adding an extra node to its 
filtering list can lead to solutions where none of its packets ever route. 
In other words, it {\bf cannot} avoid using an intermediate node it dislikes!




\section{Filtering: Anything-Goes!}
\label{sec:anything-goes}

Here we consider the case where every node has an empty filtering list.
This case is conceptually simple but still contains many 
technical difficulties involved in tracking packets when nodes become mistaken in their connectivity assessments.
In this case, networks with no equilibrium can exist. Figure~\ref{fig:no-stable} presents such an example.
Moreover, in this example, fair activation sequences exist where the node $v$ will
never be in the sink component; for example, repeatedly activate nodes
according to the permutation $\{v,u,w,x,y\}$.
Despite this, every packet will route in two rounds!
This example nicely illustrates the need to track packets if we want to understand
the efficacy of BGP-like protocols.
\begin{figure}[h!]
  \fbox{
  \begin{minipage}{\textwidth}
      \centerline{ \includegraphics[scale=0.6] {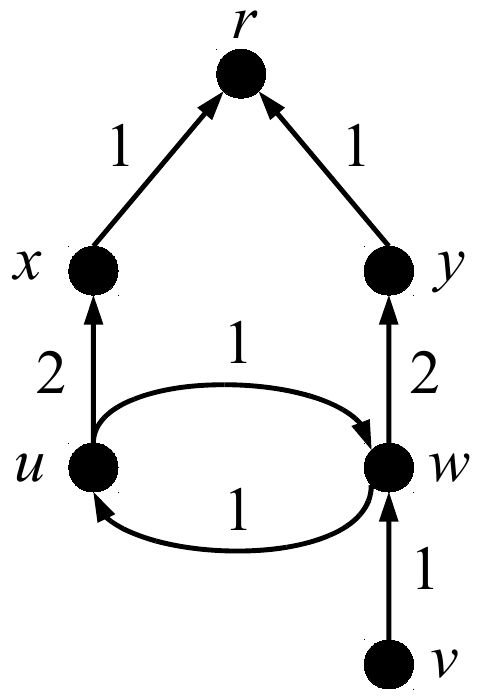} }
      \caption[A network with no stable spanning tree.]  {A network with no stable spanning tree.
        (Arc numbers indicate rankings, e.g., the number
        $2$ on the arc $(u,x)$ means that $x$ is the second choice of
        $u$.)  }
      \label{fig:no-stable}
    \end{minipage}
  }
\end{figure}

For this class of preference functions, we show that the property of successful routing is universal.
In any network, every packet will reach the sink. Specifically, we present a fair activation sequence 
of four rounds that routes every packet, even when there is no equilibrium.

Observe that when filtering lists are empty, a node $v$ only needs to known whether its neighbour $u$ has a path to the sink,
as $v$ will never discount a path because it contains a node it dislikes.
Thus, we can view each node as having two states: clear or opaque.
A node is {\em clear} if it is in the sink-component (the nomenclature derives from the fact that a packet at such a node will then reach the sink -- that is, ``clear"); otherwise, a node is {\em opaque}. 
Of course, as nodes update their chosen next-hop over time, they may be mistaken in their beliefs (inconsistent) as the routing graph changes.
In other words, some clear nodes may not have ``real'' paths to the sink.  
After the learning step at the end of the round, these clear-opaque states become correct again.
Our algorithm and analysis are based on properties of the
network formed by the first-choice arcs, called the {\em first class network}. 
We say that an arc $(u,v)$ of $G$ is a {\em first-choice} arc if $v$ is the most preferred neighbour of $u$. 
We denote the first class network by $F=(V,A_1)$, where $A_1$ is the
set of the first-choice arcs.
As in a routing graph $\calR$, every node in $F$ has one outgoing arc. 
Thus, every component of $F$ is a {\em $1$-arborescence}, a tree-like structure with either a cycle or a single node as a root.
We denote the components of $F$ by $F_0,F_1,\dots, F_\ell$, where $F_0$ is the component containing the sink $r$.
Each $F_j$ has a unique cycle $C_j$, called a {\em first class cycle}. 
We may assume the first class cycle in $F_0$ is a self-loop at the sink $r$.
Furthermore, when activated, every node in $F_0$ will always choose its neighbour in $F_0$; so, we may assume 
wlog that $F_0$ is the singleton node $\{r\}$.
The routing graph at the beginning of round $t$ is denoted by $\calR_t$. 
We denote by $\calK_t$ and  $\calO_t$ the set of clear and the
set of opaque nodes at the start of round $t$. 
%
Now we show that there is an activation sequence which routes every packet in four rounds.

The proof has two parts: a {\em coordination phase} and a {\em routing phase}.
In the first phase, we give a coordination algorithm that generates 
a permutation that gives a red-blue colouring of the nodes 
with the following two properties:
\begin{enumerate}[(i)]
\item For each $F_j$, every node in $F_j$ has the same colour, i.e.,
the colouring is coordinated.
\item If the first class cycle $C_j$ of $F_j$ contains a clear node, then all nodes in $F_j$ must be coloured blue.
\end{enumerate}
We remark that, subject to the these two properties, our algorithm will maximise the number of nodes coloured red, but this is not 
needed to prove that we can route a packet successfully.

The usefulness of this colouring mechanism lies in the fact that the corresponding permutation 
is a fair activation sequence that will force the red nodes to lie in the sink-component and the
blue nodes to lie in non-sink components. Moreover, bizarrely, running this coordination algorithm four 
times in a row ensures that every packet routes!
So, in the second phase (the routing phase), we simply run the coordination algorithm three more times.
\subsection{The Coordination Phase.}
The algorithm Coordinate$(\calK_t)$ presented in Procedure~\ref{algo:coord} constructs a red-blue colouring of the nodes,
i.e., a partition $(R,B)$ of $V$ (where $v\in R$ means that $v$ is coloured red and $v\in B$ means that $v$ is coloured blue) and 
which has the property that 
any node $v\in R$ prefers some node in $R$ to any node $w\in B$, and
any node $v\in B$ prefers some node in $B$ to any node $w\in R$.

\begin{algorithm}
\caption{Coordinate($\calK_t$)}
\label{algo:coord}
\begin{algorithmic}[1]
\REQUIRE A set of clear nodes $\calK_t$. 
\ENSURE A partition $(R,B)$ of $V$.
\STATE Let $B_0:=\bigcup_{i: V(C_j)\cap\calK_t\neq\emptyset}V(F_i)$ be a set of nodes contained in an $F$-component whose first class cycle $C_i$ has a clear node.
\STATE Initialise $q := 0$. 
\REPEAT
   \STATE Update $q := q + 1$.
   \STATE Initialise $B_q := B_{q-1}$, $R_q := \{r\}$ and
   $U:=V\setminus(R_q\cup B_q)= R_{q-1}\setminus\{r\}$.
   \WHILE{$\exists$ a node $v\in U$ that prefers a node in $R_q$ to nodes in $B_{q-1}\cup(U\cap \mathcal{K}_t)$}
      \STATE Move $v$ from $U$ to $R_q$.
   \ENDWHILE
   \STATE Add all the remaining nodes in $U$ to $B_q$. \label{step:add-blue-nodes}
\UNTIL{$B_q=B_{q-1}$.}
\RETURN $(R_q,B_q)$. 
\end{algorithmic}
\end{algorithm}

Observe that Coordinate$(\calK_t)$ contains many loops. However, we only wish to generate a 
single activation sequence $\pi_t$ from the procedure. To do this, we take the output partition $(R,B)$
and use it to build an activation sequence.

We begin by activating nodes in $B_0=\bigcup_{i:C_i\cap \mathcal{K}_i\neq\emptyset}V(F_i)$, 
the components $F_i$ whose first class cycles contain at least one clear node. 
To order the nodes of $B_0$, we proceed as follows. For each $F_i$ with $V(F_i)\subseteq B_0$, take a 
clear node $v\in C_i\cap\calK_t$. Then activate the nodes of $F_i$ (except $v$) in increasing order of 
distance from $v$ in $F_i$, and after that activate $v$. 
This forms a non-sink-component $F_i$ in the routing graph as every node can choose its first-choice. 
Next, we activate the remaining nodes in $B$.
We order the nodes of $B\setminus B_0$ in a greedy fashion; a node can be activated once 
its most-preferred {\em clear} neighbour is in $B$. 
Finally, we activate the nodes in $R$. Again, this can be done greedily. 
Specifically, we activate nodes of $R$ in the same order as when they were added to $R$.

%
%

\begin{lemma}
\label{lmm:form-components}
Given a partition $(R,B)$ from Coordinate($\calK_t$), 
the activation sequence $\pi_t$ induces a sink-component on $R$ and
non-sink-components on $B$.  
\end{lemma}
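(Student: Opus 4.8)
The plan is to trace the execution of $\pi_t$ through its three successive blocks — first the nodes of $B_0$, then the nodes of $B\setminus B_0$, then the nodes of $R$ in the order in which \texttt{Coordinate} added them to $R$ — while maintaining an invariant recording, at each moment, which nodes are clear and which component of the current routing graph contains each reactivated node. Two facts will drive everything. First, a clean reformulation of the two defining properties of the partition: the first-choice (most-preferred) neighbour of \emph{every} node other than $r$ lies on the same side of the cut $(R,B)$ as the node itself, since otherwise that neighbour would witness a preference for a node of the opposite side over every node of the node's own side. Second, the lemma reduces to three statements about the final routing graph $\calR_t$: (G1) every node of $B$ points to a node of $B$ (or to nothing); (G2) every node of $R\setminus\{r\}$ points to a node of $R$; (G3) $R$ contains no directed cycle. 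Given (G1)--(G3), no arc of $\calR_t$ crosses the cut, so every component lies entirely inside $R$ or entirely inside $B$; the components inside $B$ miss $r\in R$ and so are non-sink, while $R$ is a single in-arborescence rooted at $r$, i.e.\ the sink-component.

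\emph{Phase 1.} Fix a component $F_i\subseteq B_0$ and its distinguished clear cycle node $v\in C_i\cap\calK_t$. Deleting the single first-choice arc out of $v$ makes $F_i$ an in-arborescence rooted at $v$, and $\pi_t$ activates its nodes in increasing depth, with $v$ last. By induction on this order, when a node $w\neq v$ of $F_i$ is activated its first-choice neighbour is one step closer to $v$, hence already activated and clear, so $w$ selects it and stays in $F_i$; and when $v$ is finally activated the next cycle node $c_2$ has been activated and is clear, while $v$ has been untouched and is still clear because $v\in\calK_t$, so $v$ selects $c_2$ and closes $C_i$. Thus $\calR_t$ restricted to $V(F_i)$ is exactly $F_i$ with all first-choice arcs; since $r\in F_0=\{r\}$ and $F_i\neq F_0$, this piece is contained in a non-sink part of the graph, and every node of $B_0$ is clear at the end of Phase 1.

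\emph{Phases 2 and 3.} For Phase 2 the plan is to show that the greedy rule for $B\setminus B_0$ — process $v$ once its best currently-clear neighbour is in $B$, or once $v$ has no clear neighbour — can be carried out to completion; in either case $v$ does not point into $R$, which together with Phase 1 gives (G1). The facts that make the greedy order cover all of $B\setminus B_0$ are the same-side property (so the first-choice chain out of any $v\in B\setminus B_0$ stays in $B$ and loops on the cycle $C_j$ of its first-class component $F_j$) together with the observation that $C_j$ contains no clear node when $F_j\not\subseteq B_0$; from these, and the way \texttt{Coordinate} grows $B$, one argues that whenever a node is next to be processed either its top $B$-neighbour has already been made clear or all of its higher-ranked neighbours are provably not clear. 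For Phase 3 I would induct on the order in which nodes entered $R$ during the final (fixed-point) iteration of \texttt{Coordinate}, with base case $r$. When $v\in R$ is activated, let $y\in R$ be the node, added before $v$, whose presence in $R_q$ caused $v$ to be moved; at the fixed point the set $B_{q-1}$ equals the final $B$, so the \texttt{while}-condition tells us that $v$ prefers $y$ to every node of $B$ and to every $\calK_t$-node that entered $R$ after $v$. By the induction hypothesis $y$ already lies in the sink-component (clear and consistent), hence is a valid choice for $v$; and every other currently-clear neighbour of $v$ is either an $R$-node that entered before $v$ (already in the sink-component) or a node that $v$ ranks below $y$. So $v$'s best valid choice is an $R$-node already in the sink-component, $v$ joins it, and the sink-component stays an in-arborescence throughout; this gives (G2) and (G3).

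\emph{Main obstacle.} The delicate part is exactly the clear/opaque bookkeeping underlying all three phases: within a single round a node can be \emph{clear but inconsistent}, so ``clear'' is a moving target, and one must verify both that the greedy order in Phase 2 can always make progress — never reaching a configuration in which every unprocessed node of $B\setminus B_0$ has its best clear neighbour in $R$ — and that, at the moment each node is activated, its best valid choice is precisely the same-side node promised by the partition. Pinning down the invariant that controls the clear set at each step (essentially: the clear nodes not yet reactivated are governed by $\calK_t$, the reactivated nodes of $B$ sit inside $B$-components, and the reactivated nodes of $R$ sit inside the growing sink-component) is where the real work of the lemma lies, and it is the step I expect to be the most laborious to get exactly right.
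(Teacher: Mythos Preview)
Your proposal is correct and follows the same three-phase decomposition as the paper's proof: $B_0$ via the first-class components, $B\setminus B_0$ via the greedy rule, then $R$ by induction on the order of entry into $R_{q^*}$ during the fixed-point iteration of \texttt{Coordinate} (your Phase~3 argument is essentially identical to the paper's). Two small remarks: the ``same-side'' fact you isolate is what the paper proves as the \emph{Coordination} clause of the very next lemma, so you are simply front-loading it; and you are right that Phase~2 is the delicate step --- the paper's own justification that the greedy order never stalls is quite terse, and the cleanest way to close the gap uses the iteration structure you allude to: if every unprocessed $v\in B\setminus B_0$ had its top clear neighbour in $R$, then the witness $y^*_v\in B_{q(v)-1}\cup(U_{q(v)}\cap\calK_t)$ from the \texttt{while}-condition would itself have to be an unprocessed, opaque node of $B\setminus B_0$ with strictly smaller iteration index $q(y^*_v)<q(v)$, yielding an infinite descent.
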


\begin{proof}
First, let us verify by induction that each node in $R$ ends up in the sink-component.
For the base case, the sink $r$ is clearly in the sink-component.
Now, the red nodes are activated after the blue nodes. 
Let $q^*$ be the last iteration of Coordinate$(\calK_t)$; hence, $B_{q^*}=B_{q^*-1}$. 
So, at the start of this iteration, $B_{q^*}=B$ and $R_{q^*}=\{r\}$. 
Despite this, at the end of the round, we have $R_{q^*}=R$.
This implies that every node $v$ of $R$ prefers a node of $R_{q^*}$ to a node of 
$B_{q^*-1}\cup ( U\cap \mathcal{K}_t) = B\cup ( U\cap \mathcal{K}_t) \supseteq B$
when it is added to $R_{q^*}$.
But, then $\pi_t$ orders the nodes such that, on activation, every node $v$ of $R$ prefers
a node of $R_{q^*}$ (which are in the sink component by induction) to a node of $B$.
Thus, regardless of which components the blue nodes were placed in upon activation,
all the red nodes are placed in the sink-component
(it can only help if the some of the red nodes were placed in the sink-component).
%

Now, let us show that the blue nodes do not end up in the sink-component.
By the construction of $\pi_t$, the nodes of $B_0$ connect together via their first class components.
Consequently, they belong to non-sink components.
Finally, consider the nodes in $B\setminus B_0$. 
Observe that the size of $B_q$ increases in each iteration, that is, $B_q\subsetneq B_{q-1}$ for all $q< q^*$. 
So, at the end of their iterations, $R_q\subseteq R_{q-1}$. 
Now, take a node $v$ added to $B_q$ in Step~\ref{step:add-blue-nodes} in iteration $q$. 
At this point, the remaining nodes are $U=R_{q-1}\setminus R_{q}$, 
and $v$ prefers some node $w$ in $B_{q-1}\cup(U\cap \mathcal{K}_t)$ to any node 
in $R_q$ (or has no neighbours in $R_q$ at all).
But, $B_{q-1}\cup(U\cap \mathcal{K}_t) = B_{q-1}\cup(R_{q-1}\setminus R_{q} \cap \mathcal{K}_t)
\subseteq B_{q-1}\cup(R_{q-1}\setminus R_{q}) =B_q$.
Thus, $v$ either prefers some node in $B_{q^*}$ to any node in $R_{q^*}$ or
has no neighbours in $R_{q^*}$.
But, by assumption, every node has a path to the sink in the ``all-choice'' graph $G$. 
So, when considering the blue nodes, there must be a node in $B\setminus B_0$ 
that has its most preferred ``clear'' neighbour in $B$.
Therefore, $\pi_t$ induces non-sink components on $B$.
\end{proof}

The coordination and other desirable properties hold when we apply $\pi_t$.
\begin{lemma}
\label{lmm:act-coord}
Given the activation sequence $\pi_t$ for $(R,B)$, at the end of the round, the following hold:
\squishlist
\item The sink-component includes $R$ and excludes $B$.
\item {\tt Coordination:} For each $F_i$, either all the nodes of $F_i$ are in the sink-component or none of them are.
\item Let $B_0=\bigcup_{i:V(C_i)\cap\calK_t\neq\emptyset}V(F_i)$ and suppose $\calK_t=B_0$. If a packet travels for $n$ hops 
but does not reach the sink, then it must be at a node in $\calK_t$.
\squishend
\end{lemma}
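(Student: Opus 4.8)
The plan is to prove Lemma~\ref{lmm:act-coord} by verifying its three bullets in turn, leaning heavily on Lemma~\ref{lmm:form-components} for the first two. For the first bullet, I would simply invoke Lemma~\ref{lmm:form-components}: the activation sequence $\pi_t$ induces a sink-component on $R$ and non-sink-components on $B$, so at the end of the round the sink-component is exactly the union of the $R$-nodes (together with $r$) and contains no node of $B$. For the second bullet (Coordination), I would combine the first bullet with property~(i) established for the output of Coordinate$(\calK_t)$, namely that each $F_i$ is monochromatic: every node of $F_i$ lies entirely in $R$ or entirely in $B$. Since the sink-component is precisely $R$, this means each $F_i$ is either wholly inside the sink-component or wholly outside it. I should double check here that the monochromaticity of each $F_i$ really is guaranteed by Coordinate: $B_0$ is defined as a union of whole $F$-components, and the repeat-loop only ever moves individual nodes between $U$ and $R_q$, so a priori an $F_i$ could get split — I would need to argue (or cite the surrounding text's claim) that the greedy red-growth process, seeded with whole components in $B_0$, keeps every $F_i$ monochromatic; this is the kind of invariant that should follow from the structure of first-choice arcs, since if one node of $F_i$ has its first-choice (hence most-preferred) neighbour coloured a certain way, it prefers that colour, propagating the colour along $F_i$.

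**The third bullet, which is the crux.**

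The real work is the third bullet: assuming $\calK_t = B_0$, any packet that travels $n$ hops without reaching the sink must end at a node of $\calK_t$. The key structural fact is that after applying $\pi_t$, the routing graph $\calR_{t+1}$ consists of the sink-component on $R$ plus non-sink components on $B$, and — crucially — by the construction of $\pi_t$ the nodes of $B_0$ are wired together exactly via their \emph{first-class} components $F_i$. So within $B_0$, the routing graph restricted to $B_0$ is precisely $\bigcup_{i: V(F_i) \subseteq B_0} F_i$, a union of $1$-arborescences whose unique cycles are the first-class cycles $C_i$. A packet that never reaches the sink is forwarded forever inside some non-sink component of $\calR_{t+1}$; since each non-sink component is a $1$-arborescence, the packet must eventually enter and then loop around that component's unique cycle. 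After $n$ hops — at least as many hops as there are nodes — the packet has had time to reach the cycle of whatever component it sits in, so it is on a cycle. If that component lies in $B \setminus B_0$, I need to argue its cycle contains no clear node: but $\calK_t = B_0$ means the \emph{only} clear nodes are in $B_0$, so a component in $B\setminus B_0$ has no clear nodes at all, and in particular its cycle has none — wait, that doesn't immediately place the packet in $\calK_t$. Let me reconsider: I think the actual claim being used is that a packet trapped on a cycle of a component in $B \setminus B_0$ can't happen, because the greedy activation of $B \setminus B_0$ only ever lets a node join $B$ once its most-preferred \emph{clear} neighbour is already in $B$, so every node of $B \setminus B_0$ points (via its routing path) toward a clear node, i.e., toward $B_0$; hence the components of $\calR_{t+1}$ restricted to $B \setminus B_0$ actually attach into the $B_0$-components rather than forming their own cycles. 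Therefore the only cycles among non-sink components are the first-class cycles $C_i$ inside $B_0$, and after $n$ hops a non-sink-bound packet is circulating on such a $C_i \subseteq B_0 = \calK_t$.

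**Assembling the argument and the main obstacle.**

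So the proof of the third bullet runs: (1) a packet not reaching the sink in $n$ hops is in a non-sink component of $\calR_{t+1}$ for the whole round; (2) each non-sink component is a $1$-arborescence, so after $\geq n$ forwardings the packet is on that component's unique cycle; (3) using $\calK_t = B_0$ and the greedy construction of $\pi_t$ restricted to $B\setminus B_0$ (every such node routes toward a clear neighbour, i.e., into $B_0$), the only cycles present in non-sink components are the first-class cycles $C_i$ with $V(F_i) \subseteq B_0$; (4) hence the packet is on some $C_i \subseteq B_0 = \calK_t$, proving it is at a node of $\calK_t$. The main obstacle I anticipate is step~(3): pinning down exactly why no new cycle can form among the $B \setminus B_0$ nodes under $\pi_t$. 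The delicate point is that during the \emph{forwarding} plane a node may be inconsistent (a ``clear'' node without a real path), but within a single round $\calR_{t+1}$ is fixed, and the greedy ordering guarantees that at the moment each $v \in B \setminus B_0$ is activated its chosen next-hop is a node already placed in $B$ whose first-choice-or-better path leads to $B_0$; chasing this chain shows the routing path of $v$ terminates at the cycle of some $F_i \subseteq B_0$ rather than cycling within $B \setminus B_0$. I would make this precise by an induction on the greedy activation order of $B \setminus B_0$, maintaining the invariant ``every activated node of $B\setminus B_0$ has its routing path ending on a first-class cycle $C_i \subseteq B_0$.''
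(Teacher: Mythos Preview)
Your approach is essentially the paper's: invoke Lemma~\ref{lmm:form-components} for the first bullet, and for the third bullet argue that under $\calK_t=B_0$ every node of $B\setminus B_0$, when activated, selects a next-hop that was activated strictly earlier (so the routing graph on $B\setminus B_0$ is acyclic and feeds into $B_0$), whence the only cycles in non-sink components are the first-class cycles $C_i\subseteq B_0=\calK_t$. The one place you defer---citing ``property~(i)'' for Coordination---is not established before this lemma; the paper proves it \emph{inside} the lemma via the two-case contradiction your propagation sketch is gesturing at: if $v\in R$ has first choice $w\in B=B_{q^*-1}$, the while-loop condition would never have moved $v$ into $R_{q^*}$; conversely if $v\in B\setminus B_0$ has first choice $w\in R\subseteq R_q$ for all $q$, then $v$ would have been added to every $R_q$.
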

\begin{proof}
The first statement follows Lemma~\ref{lmm:form-components}.

For the second statement, it suffices to show that, for each $F_i$, either $V(F_i)\subseteq R$ or $V(F_i)\subseteq B$.
If not, then there is an $F_i$ containing both red and blue nodes.
Thus, there are two possibilities.\\
(i) There is a node $v\in F_i$ such that $v$ is in $R$ and its first choice is $w\in B$.
This is not possible. To see this, observe that we must have $w\in B_{q-1}$ 
because of the stopping condition $B_q=B_{q-1}$.
But, then $v$ could not be added to $R_q$, a contradiction.
(ii) There is a node $v\in F_i$ such that $v$ is in $B$ and its first choice is $w\in R$.
Again, this is not possible. To see this, observe that we must have 
$v\notin B_0$ because each $B_0$ consists only of first class components that are 
monochromatic blue. Recall also that $R_q\subseteq R_{q-1}$ for all $q$. 
Thus, we must have $w\in R_q$ for all $q$. But, then $v$ would have been added to $R_q$,
a contradiction.
%
For the third statement,
note that a packet that travels for $n$ hops but does not reach the sink must be stuck in some cycle. 
Consider the construction of $(R,B)$. Since $\calK_t=B_0$, we only add a node to $B$ whenever it prefers some node in $B$ to any node in $R$. 
Because $U\cap\calK_t=\emptyset$, nodes in $B\setminus B_0$ cannot form a cycle on their own.
Thus, the packet is stuck in a cycle that contains a clear node; the only such cycles are the first class cycles of $B_0$
since $\calK_t=B_0$.
\end{proof}

The following lemma follows by the construction of a partition $(R,B)$.

\begin{lemma} \label{lmm:prop-coord}
Let $(R',B')$ be any partition generated from the procedure Coordinate(\,), and 
let $(R,B)$ be a partition obtained by calling Coordinate($\calK_t$) where $\bigcup_{i:V(C_i)\cap\calK_t\neq\emptyset}V(F_i)\subseteq B'$.
Then $R'\subseteq R$.
\end{lemma}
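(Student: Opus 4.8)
Lemma~\ref{lmm:prop-coord} asserts a monotonicity property: if $(R',B')$ is any partition produced by some run of \texttt{Coordinate}, and $(R,B)$ is produced by \texttt{Coordinate}$(\calK_t)$ with $B_0 := \bigcup_{i:V(C_i)\cap\calK_t\neq\emptyset}V(F_i)\subseteq B'$, then $R'\subseteq R$ (equivalently $B\subseteq B'$). So starting the "blue seed" smaller than $B'$ cannot force any node that was red in the $(R',B')$ run to become blue now.

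The statement $R'\subseteq R$ is equivalent to $B\subseteq B'$, since $(R,B)$ and $(R',B')$ are both partitions of $V$, so I would prove the latter. Recall that Coordinate$(\calK_t)$ produces a nested chain $B_0\subseteq B_1\subseteq\cdots\subseteq B_{q^*}=B$, where $B_0=\bigcup_{i:V(C_i)\cap\calK_t\neq\emptyset}V(F_i)$ and, in iteration $q$, the set $B_q$ is obtained from $B_{q-1}$ by greedily growing a red set $R_q$ out of $\{r\}$ — adding a node $v$ whenever $v$ prefers a current red node to every node of the \emph{obstruction set} $B_{q-1}\cup(U\cap\calK_t)$ — and then recolouring the leftover of $U$ blue; at the end of the iteration $V=R_q\sqcup B_q$ and $U^{\mathrm{fin}}:=B_q\setminus B_{q-1}$ is exactly what was added to blue. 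So iteration $q$ yields, as a byproduct, an ordering of $R_q\setminus\{r\}$ in which each node, when added, prefers some already-red node to every node of the obstruction set. The plan is to show $B_q\subseteq B'$ for all $q$ by induction; the base case $B_0\subseteq B'$ is exactly the hypothesis, and the whole point is that $B'$ is "closed" under one further iteration (a pre-fixed point containing the seed), which forces $B=B_{q^*}\subseteq B'$.

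For the inductive step, assume $B_{q-1}\subseteq B'$ and suppose, for contradiction, that $(B_q\setminus B_{q-1})\cap R'\neq\emptyset$. Since $(R',B')$ is itself an output of Coordinate, its last iteration supplies a greedy ordering $w_1,\dots,w_k$ of $R'\setminus\{r\}$ together with, for each $w_\ell$, a certificate neighbour $a_\ell\in\{r,w_1,\dots,w_{\ell-1}\}$ that $w_\ell$ prefers to every node of $B'\cup(\{w_\ell,\dots,w_k\}\cap\calK')$. Among the nodes of $(B_q\setminus B_{q-1})\cap R'$ let $v=w_i$ have the smallest index. By minimality, every $w_{i'}$ with $i'<i$ is not in $B_q\setminus B_{q-1}$; being in $R'$ it lies outside $B'$, hence (as $B_{q-1}\subseteq B'$) outside $B_{q-1}$, so it lies in $R_q$ at the end of iteration $q$. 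Therefore $a_i\in R_q$. But $v$ ended up blue in iteration $q$, so it was never moved to $R_q$; in particular, once the while-loop stopped (with $v\in U^{\mathrm{fin}}$), the red node $a_i$ did not trigger a move of $v$, so there is a node $b\in B_{q-1}\cup(U^{\mathrm{fin}}\cap\calK_t)$ that $v$ strictly prefers to $a_i$.

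It remains to rule out every possibility for $b$. If $b\in B_{q-1}\subseteq B'$, or if $b\in U^{\mathrm{fin}}$ with $b\in B'$, then "$v$ prefers $b$ to $a_i$" contradicts $v$'s certificate, which has $v$ preferring $a_i$ to all of $B'$. If $b\in U^{\mathrm{fin}}\cap R'\cap\calK'$, then $b=w_j$ for some $j$ (note $b\neq r$), and minimality of $i$ together with $b\neq v$ (as $v\neq r$ has no self-loop) gives $j>i$; hence $b\in\{w_i,\dots,w_k\}\cap\calK'$ and the certificate is again contradicted. The only surviving case is $b\in U^{\mathrm{fin}}\cap R'$ with $b\notin\calK'$: a node that is clear for the Coordinate$(\calK_t)$ run but not clear for the run producing $(R',B')$. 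This case is the heart of the lemma and the step I expect to be the main obstacle — it is not excluded by the two greedy orderings on their own. Eliminating it is precisely what "follows by the construction" must cash out: one has to use how $\calK_t$ and $\calK'$ are actually related in the context where the lemma is invoked, together with the hypothesis $B_0(\calK_t)\subseteq B'$ and the fact that a newly blue node lies in no first-class component whose cycle meets $\calK_t$, to show no such $b$ exists; equivalently, that the greedy ordering of the $(R',B')$ run still witnesses that all of $V\setminus(\{r\}\cup B')$ is made red when the obstruction sets are recomputed with $\calK_t$ in place of $\calK'$. Once this is in hand the induction closes and $B\subseteq B'$, i.e. $R'\subseteq R$.
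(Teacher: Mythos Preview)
Your induction on $q$ (showing $B_q\subseteq B'$) is exactly the dual of the paper's induction (showing $R'\subseteq R_q$), so the overall approach matches. You go further than the paper, which simply asserts ``since $B_0\subseteq B'$, every node of $R'$ must have been added to $R_1$'' without justification; you try to verify this and correctly isolate the one case the two greedy certificates do not cover: a node $b\in U^{\mathrm{fin}}\cap R'\cap\calK_t$ with $b\notin\calK'$. This case is not a mirage---under the stated hypothesis the lemma is actually false. Consider a single first-class component on $\{a,b,c,\beta_1,\beta_2\}$ with cycle $a\to b\to c\to a$ and pendants $\beta_1\to b$, $\beta_2\to c$; let $a,c,\beta_2$ have second choice $\beta_1$, let $b,\beta_1$ have second choice $\beta_2$, and give every node third choice $r$. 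Running Coordinate with any $\calK'$ that misses the cycle (e.g.\ $\calK'=\emptyset$) yields $(R',B')=(V,\emptyset)$; with $\calK_t=\{\beta_1,\beta_2\}$ one has $B_0=\emptyset\subseteq B'$, yet in iteration~1 every non-sink node ranks some $\beta_i\in U\cap\calK_t$ above its only $R_1$-neighbour $r$, so the while-loop never fires and Coordinate($\calK_t$) outputs $R=\{r\}\not\supseteq R'$.

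The repair is precisely the one you anticipate: pull in the context in which the lemma is actually invoked. In the four-round routing proof the call is with $\calK_t=\calK_4=R_3$, which after coordination equals its own $B_0$ and satisfies $\calK_t\subseteq B_2=B'$. Under the stronger hypothesis $\calK_t\subseteq B'$ the bad case evaporates: then $U\cap\calK_t\subseteq B'$, so in every iteration the obstruction set $B_{q-1}\cup(U\cap\calK_t)$ is contained in $B'$; the last-iteration certificates of the $(R',B')$ run (each $v\in R'$ prefers some earlier red node to all of $B'$) transfer verbatim to give $R'\subseteq R_q$ for all $q$. So the paper's terse argument is complete only under this stronger hypothesis, and your diagnosis of both the gap and its resolution is on target.
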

\begin{proof}
Consider a partition $(R_q,B_q)$ constructed during a call to Coordinate($\calK_t$). 
Observe that $B_0\subseteq B'$ because 
$B_0=\bigcup_{i:V(C_i)\cap\calK_t\neq\emptyset}V(F_i)$.
By the construction of $(R',B')$, since $B_0\subseteq B'$, every
node of $R'$ must have been added to $R_1$, i.e., $R'\subseteq R_1$. 
Inductively, if $R'\subseteq R_q$ for some $q\geq 1$, then
$B_q\subseteq B'$ and thus $R'\subseteq R_{q+1}$ by the same
argument. 
\end{proof}

\subsection{The Routing Phase }

%
Running the coordination algorithm four times ensures every packet will have 
been in the sink-component at least once, and thus, every packet routes.
%

\begin{theorem}\label{thm:route-four-rounds}
 In four rounds, every packet routes.
\end{theorem}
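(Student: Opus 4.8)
The plan is to track a single packet $p$ across four consecutive calls to the coordination algorithm and show it must enter the sink-component at least once; since once a packet is in the sink-component during a round's forwarding phase it reaches $r$ (we forward for $n$ hops, enough to traverse any path in an acyclic sink-component), this suffices. Throughout, let $\calK_t$ denote the clear set at the start of round $t$, and recall from Lemma~\ref{lmm:act-coord} that after the round built from Coordinate($\calK_t$), the sink-component is exactly the red set $R_t$ and the clear set $\calK_{t+1}$ equals $R_t$; moreover the third bullet of that lemma tells us that \emph{if} $\calK_t = B_0^{(t)} := \bigcup_{i:V(C_i)\cap\calK_t\neq\emptyset}V(F_i)$, then any packet not reaching the sink in $n$ hops is stuck at a clear node, i.e.\ in some first-class cycle of $B_0^{(t)}$.

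The key structural observation I would establish first is a monotonicity statement: the red sets are nondecreasing across rounds once things ``settle.'' Concretely, using Lemma~\ref{lmm:prop-coord} with $(R',B') = (R_t,B_t)$: the new clear set $\calK_{t+1}=R_t$, so $B_0^{(t+1)} = \bigcup_{i:V(C_i)\cap R_t\neq\emptyset}V(F_i)$. I need to argue $B_0^{(t+1)} \subseteq B_t$ fails in general — rather, I want the reverse kind of containment to feed the lemma — so I would instead apply Lemma~\ref{lmm:prop-coord} in the direction that shows $R_t \subseteq R_{t+1}$, by checking that $B_0^{(t+1)}\subseteq B_t$ cannot happen badly: a first-class component $F_i$ is in $B_0^{(t+1)}$ iff its cycle meets $R_t$, and by the Coordination property (second bullet of Lemma~\ref{lmm:act-coord}) such an $F_i$ has $V(F_i)\subseteq R_t$, so in fact $B_0^{(t+1)}\subseteq R_t = V\setminus B_t$ only if that $F_i$ were simultaneously... this is exactly the point where care is needed, and I expect it to be the main obstacle. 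The clean way around it: show that after the first coordination round, every first-class cycle that meets the sink-component is entirely inside it (Coordination), hence from round $2$ onward $\calK_t$ is a union of whole $F$-components and satisfies $\calK_t = B_0'$-type hypotheses needed for the third bullet of Lemma~\ref{lmm:act-coord}, giving $R_t$ nondecreasing for $t\ge 2$ and the ``stuck packets sit in first-class cycles of $B_0$'' dichotomy.

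With monotonicity in hand, I would run the packet-chasing argument. After round $1$: packet $p$ is either at the sink (done) or stuck in a first-class cycle $C_i$ of some non-sink $F_i$, so at the start of round $2$ it sits at a node of $C_i$ with $V(F_i)\subseteq B_1$. After round $2$ (built from $\calK_2 = R_1$): either $F_i$'s cycle now meets $R_2$, in which case by Coordination $V(F_i)\subseteq R_2$, so $p$ was in the sink-component during round~$2$'s forwarding — but wait, $p$ needs to have been there \emph{during} the forwarding phase, and the forwarding happens before Route-Verification updates paths; here I use that the routing graph $\calR_2$ already places all of $F_i$ into the sink-component, so $p$, sitting on $C_i$, reaches $r$. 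Otherwise $F_i$ stays blue and $p$ either moves to a node of the sink-component (done) or again ends round $2$ stuck in a first-class cycle — and the monotonicity $R_1\subseteq R_2\subseteq\cdots$ plus the fact that the algorithm \emph{maximizes} the red set subject to the two colouring properties forces that after at most a bounded number of rounds every first-class cycle reachable-to-sink has turned red. The remaining work is a counting/potential argument bounding this by two more rounds: I would argue that the set of first-class components that can ever become red is determined, and a component $F_i$ currently blue whose cycle can reach $R$ through the all-choice graph must turn red at the next coordination (since Coordinate greedily pulls into $R$ every node preferring $R$ to $B$, and emptiness of filtering lists means any all-choice path to the sink eventually forces such a preference), so two rounds after round $2$ everything relevant is red; hence by the end of round $4$ packet $p$ has been in the sink-component, and therefore routes.

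The step I expect to be genuinely delicate is pinning down exactly why \emph{two} extra coordination rounds (rounds $3$ and $4$) suffice rather than $O(n)$ of them — i.e.\ why the ``blue first-class cycle containing the packet'' cannot persist round after round. The resolution should come from combining (a) $R_t$ is nondecreasing, (b) the maximality of $R_t$ among colourings with the two properties, and (c) the third bullet of Lemma~\ref{lmm:act-coord}: once $\calK_t$ is a union of whole $F$-components (true from round $2$), Coordinate($\calK_t$) with $\calK_t = B_0^{(t)}$ colours blue \emph{exactly} the components it is forced to, so any $F_i$ whose cycle has an all-choice route into the current red set is coloured red next round; two applications then absorb the packet's component. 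I would write this as: round $3$ turns $p$'s component red unless $p$ escapes to red earlier, and in the degenerate case where round $3$'s clear set differs from $B_0$ (which can only happen if new clear nodes appeared), round $4$ restores the hypothesis and finishes — matching the four-round bound claimed in Theorem~\ref{thm:not-me}.
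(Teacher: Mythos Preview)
Your proposal rests on a structural claim that is false: you assert (and build the whole argument on) the monotonicity $R_t\subseteq R_{t+1}$ for $t\ge 2$. In fact the opposite happens. After round~$t$ the clear set is $\calK_{t+1}=R_t$, and by the Coordination property $R_t$ is a union of whole first-class components; hence in Step~1 of Coordinate($\calK_{t+1}$) we get $B_0^{(t+1)}=\bigcup_{i:V(C_i)\cap R_t\neq\emptyset}V(F_i)=R_t$ (modulo the sink). Since the blue set only grows during the procedure, this gives $R_t\subseteq B_{t+1}$ and therefore $R_{t+1}\subseteq B_t$: consecutive red sets are \emph{disjoint}, not nested. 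You actually compute $B_0^{(t+1)}\subseteq R_t=V\setminus B_t$ yourself and flag it as ``the point where care is needed,'' but then proceed as though monotonicity held anyway. It does not, and everything downstream (the ``potential argument,'' the claim that a blue component with a path to red turns red next round) collapses.

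The paper's argument exploits precisely this alternation. In round~1 the packet, if unrouted, lies in $B_1$ (the third bullet of Lemma~\ref{lmm:act-coord} is \emph{not} invoked here, since $\calK_1$ need not equal $B_0^{(1)}$; you use it illegitimately for round~1). From round~2 on, $\calK_t=R_{t-1}$ is a union of $F$-components, so the third bullet applies: an unrouted packet after round~2 sits in $\calK_2=R_1$; after round~3 it sits in $\calK_3=R_2$. Now the two-step comparison kicks in: $\calK_4=R_3\subseteq B_2$, so $B_0^{(4)}\subseteq B_2$, and Lemma~\ref{lmm:prop-coord} with $(R',B')=(R_2,B_2)$ yields $R_2\subseteq R_4$. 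Since the packet is in $R_2\subseteq R_4$, it routes in round~4. The four-round bound thus comes from a period-two inclusion $R_2\subseteq R_4$, not from monotone growth of $R_t$.
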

\begin{proof}
The first round $t=1$ is simply the coordination phase.
We will use subscripts on $R$ and $B$ (e.g., $R_t$ and $B_t$) to denote the final colourings output in each round 
and not the intermediate sets $R_q$ and $B_q$ used in Coordinate(\,).
Now, consider a packet generated by any node of $V$. 
First, we run Coordinate$(\calK_1)$ and obtain a partition $(R_1,B_1)$. 
By Lemma~\ref{lmm:act-coord}, if the packet is in $R_1$, then it is routed successfully, and we are done.
Hence, we may assume that the packet does not reach the sink and 
thus must be in $B_1$. 
Note that, now, each $F_i$ is either contained in $R_1$ or $B_1$ by Lemma~\ref{lmm:act-coord}.

We now run Coordinate$(\calK_2)$ and obtain a partition $(R_2,B_2)$.
By Lemma~\ref{lmm:act-coord}, $\calK_2=R_1$. 
So, if the packet does not reach the sink, it must be in $B_2$.
Since no first class component intersects both $R_1$ and $B_1$, we have 
$R_1=\calK_2=\bigcup_{i:V(C_i)\cap\calK_2\neq\emptyset}V(F_i)$.
Thus the nodes $\calK_2$ are all initially coloured blue in Step 1 of Coordinate$(\calK_2)$.
As the set of blue nodes only expands throughout the round

So, $R_1\subseteq B_2$ (since $\calK_2\subseteq B_2$) and $R_2\subseteq B_1$, and Lemma~\ref{lmm:act-coord} implies 
that the packet is in $R_1$.

Third, we run Coordinate$(\calK_3)$ and obtain a partition $(R_3,B_3)$.
Applying the same argument as before, 
we have that the packet is in $R_2$ (or it is routed), $R_2\subseteq B_3$
and $R_3\subseteq B_2$. 
Now, we run Coordinate$(\calK_4)$ and obtain a partition $(R_4,B_4)$.
By Lemma~\ref{lmm:act-coord}, we have 
$\calK_4=\bigcup_{i:V(C_i)\cap\calK_4\neq\emptyset}V(F_i)$. 
Since $R_3=\calK_4\subseteq B_2$, Lemma~\ref{lmm:prop-coord} implies that $R_2\subseteq R_4$.
Thus, the packet is routed successfully since $R_4$ is contained in the sink-component. 
\end{proof}



\section{Filtering: Not-Me!}
\label{sec:not-me}

In practice, it is important to try to prevent cycles forming in the routing graph of a network. To achieve this, {\em loop-detection} 
is implemented in the BGP-4 protocol~\cite{Ste98}.
The ``Not-Me!'' filtering encodes loop-detection in the BGP-4 protocol simply by having a filtering list $\calD(v)=\{v\}$, for every node $v$.
For this class of preference function, we again show that every packet will route. Recall, this is in 
contrast to Theorem \ref{thm:max-stable}, which says that it is NP-hard to determine whether we can 
route every packet for general filtering lists of cardinality one.
Moreover, we exhibit a constructive way to obtain a stable spanning
tree via fair activation sequences.
Interestingly, all of the packets will have routed before stability is obtained.
In particular, we give an algorithm that constructs an activation sequence such that every packet routes successfully in $\frac13 n$ rounds
whereas the network itself becomes stable in $n$ rounds.

This result is the most complicated part of our paper, so we will first give a high level overview.
Clearly, when filtering lists are non-empty, we have an additional difficulty: even if $w$ is the most preferred choice 
of $v$ and $w$ has a non-empty routing path $P$, $v$ still may not be able to choose $w$ because $P$ contains
a node on $v$'s filter list (in this case, $v$ itself). This can cause the routing graph to evolve in ways that are very difficult to
keep track of. Thus, the key idea is to design activation permutations that manipulate the routing graph
in a precise and minor fashion in each round. To do this, we search for a spanning tree with a {\em Strong Stability Property}. 
\smallskip


\begin{property}[Strong Stability - Informal]
A spanning tree $S$ has the strong stability property on
$\mathbb{O}\subseteq V$ if and only if, for every node $v\in\mathbb{O}$,  
the most preferred choice of $v$ is its parent $w$ in $S$,
even if $v$ can choose any node outside $\mathbb{O}$ {\bf and} 
any node outside its subtree in the forest $S[\mathbb{O}]$.
 \end{property}


To illustrate this property, consider a simple setting where $S$ is just the path $S=(e,d,c,b,a,r)$ and $\mathbb{O}=\{b,c,e\}$. 
If $S$ is strongly stable on $\mathbb{O}$, then $b$ must prefer $a$ to nodes in $\{r,d,e\}$.
Observe that $e$, whilst  a descendent of $b$ in $S$, is {\em not} a descendent of $b$ in $S[\{b,c,e\}]$. 
So, even if $b$ is allowed to choose $e$, which is a descendent of $b$ in $S$, $b$ still wants to choose $a$ as its parent. 

Thus, the strong stability property says that the choice of a vertex
$v\in\mathbb{O}$ in $S$ is the best one even if 
all the nodes outsides $\mathbb{O}$ change their choices. 
(that is, even if we replace $S[V\setminus\mathbb{O}]$ with a
completely different subgraph). 
%
For the special case where $\mathbb{O}= \calO$, the set of opaque nodes, 
if we activate nodes of $S$ in increasing order of distance from the sink $r$ then
every node in $\calO$ will choose its parent in $S$ -- as the clear nodes 
in $S_v$ are not desirable to connect to.
As we will see, under certain conditions, we can even maintain the choices of nodes
in $\mathbb{O}$ even if some of them are clear and some are opaque.

\smallskip

A {\em stable spanning tree}, a tree where no node wants to change its choice, 
can be found in polynomial-time, 
and given a stable spanning tree $S$, it is easy to force opaque nodes in $\calO_t$ to make the same choices as in $S$.
But, this only applies to the set of opaque nodes, which changes with each round.
The strong stability property allows us to make a stronger manipulation. 
Intuitively, the strong stability property says that once we force every node $v\in\mathbb{O}$ to make the same choice as in $S$, 
we can maintain these choices in all the later rounds.
Moreover, in each round, if we cannot route all the packets, then we can make the strong stability property spans 
three more nodes; otherwise, the property spans one more node.
Thus, in $\frac13 n$ rounds, every packet will route, but we need $n$ rounds to obtain stability.

With this overview complete we introduce some formal definitions needed for the proof.
Again, $\calO_t$ and $\calK_t$ denote the set of opaque and clear nodes, respectively, at the beginning 
of round $t$.
Given a graph $R$ and a set of nodes $U$, we denote by
$R[U]=\{(u,v):u,v\in U, (u,v)\in R\}$ the subgraph of $R$ induced by
$U$, and we denote by $\mathcal{A}_{R}^+(U) =\{(u,v):u\in U, (u,v)\in R\}$, the subgraph of arcs of $R$ induced by $U$ plus arcs leaving $U$.
%

Given a set of nodes $Q\subseteq{V}$, the {\em $Q$-subtree} of $v$, with respect to a tree $T$, is the maximal subtree rooted at $v$ 
of the forest $T[Q]$.
%
A (spanning) tree $T$ is {\em stable} if every node $v$ with 
$(v,w)\in T$, prefers its parent $w$ in $T$ to every non-descendant; 
thus, no node wants to change its next hop.
%
A spanning tree $S$ has the {\em strong stability property} on the set
of nodes $\mathbb{O}$ if every node $v\in\mathbb{O}$ with $(v,w)\in S$
prefers its parent $w$ to every node outside its
$\mathbb{O}$-subtree; observe that if $\mathbb{O}=V$ (or
$V\setminus\{r\}$), then $S$ is also stable.
%
We say that $S$ is a {\em skeleton} of a (non-spanning) tree $T$ if,
for every (maximal) subtree $F\subseteq S[\mathbb{O}]$, either
$T$ contains $\mathcal{A}^+_{S}(F)$ or $T$ contains no node of $F$.
%



\subsection{Finding a Strongly Stable Tree}
\label{sec:find-strong-tree}

In this section, we present a subroutine for finding a spanning tree
with the strong stability property.
The input of this algorithm (see Procedure~\ref{algo:findstable}) is a sink-component $T^{in}$ and a
spanning tree $S^{in}$ with the strong stability property on a given
set of nodes $\mathbb{O}$. 
The algorithm expands the strong stability property to also hold on $\calO$, 
the set of nodes not in the sink-component $T^{in}$. 

\begin{algorithm}
\caption{FindStable($T^{in}$, $S^{in}$, $\mathbb{O}$)}
\label{algo:findstable}
\begin{algorithmic}
\REQUIRE A sink-component $T^{in}$ and a spanning (or empty)
         tree $S^{in}$ such that\\
         (1)~The tree $S^{in}$ has the strong stability property on
         $\mathbb{O}$, and\\
         (2)~$S^{in}$ is a {\bf skeleton} of $T^{in}$. 
\ENSURE A stable spanning tree $S^{out}$ with the strong stability
        property on $\mathbb{O}\cup\calO$, where
        $\calO=V\setminus V(T^{in})$.
\end{algorithmic}
\begin{algorithmic}[1]
\STATE Let $\calO=V\setminus V(T^{in})$ be the set of nodes not in the
       sink component $T^{in}$.
\STATE Initialise $S^{out}:=T^{in}\cup\mathcal{A}^+_{S^{in}}(\calO)$.
\STATE Initialise $\mathcal{C}_1:=S^{out}[\calO]$.
\FOR{iteration $t:=1$ to $|\calO|$}
  \STATE Pick an arbitrary leaf $v$ of $\mathcal{C}_t$.
  \STATE Pick a node $w\in V(S^{out})$ such that $v$ prefers $w$ to
         any other node not in its $\calO$-subtree in $S^{out}$.  
         \label{step:change-nexthop}
  \STATE Replace the arc $(v,y)$ in $S^{out}$ by the arc $(v,w)$. 
  \STATE Update $\mathcal{C}_{t+1} := \mathcal{C}_t\setminus\{v\}$.
\ENDFOR
\RETURN $S^{out}$. 
\end{algorithmic}
\end{algorithm}


Before proving the correctness of the procedure 
FindStable($T^{in}$,$S^{in}$, $\mathbb{O}$), we prove some basic
facts.

\begin{lemma}[Union Lemma]
\label{lmm:union-of-stability}
Let $S$ be a spanning tree that is strongly stable on
$A\subseteq V$ and also on $B\subseteq V$.
Then $S$ is strongly stable on $Q=A\cup{B}$.
\end{lemma}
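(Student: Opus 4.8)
The key is to unwind the definition of strong stability directly. Recall that $S$ is strongly stable on a set $X$ if every node $v\in X$, with parent $w$ in $S$, prefers $w$ to every node outside its $X$-subtree (the maximal subtree rooted at $v$ in the forest $S[X]$). So to prove $S$ is strongly stable on $Q=A\cup B$, I would take an arbitrary node $v\in Q$ with parent $w=w(v)$ in $S$, and an arbitrary node $z$ outside the $Q$-subtree of $v$, and show $v$ prefers $w$ to $z$. Since $v\in Q=A\cup B$, without loss of generality $v\in A$ (the argument is symmetric in $A$ and $B$).

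The crux is then the following monotonicity observation: because $A\subseteq Q$, the forest $S[A]$ is a subforest of $S[Q]$, and hence the $A$-subtree of $v$ is \emph{contained in} the $Q$-subtree of $v$. Equivalently, any node lying outside the $Q$-subtree of $v$ also lies outside the $A$-subtree of $v$. So the node $z$ above, being outside the $Q$-subtree of $v$, is outside the $A$-subtree of $v$ as well. Now I can apply the hypothesis that $S$ is strongly stable on $A$: since $v\in A$ and $z$ is outside the $A$-subtree of $v$, node $v$ prefers its parent $w$ to $z$. This is exactly what we wanted, and since $v$ and $z$ were arbitrary, $S$ is strongly stable on $Q$.

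I should be slightly careful about one point: the $Q$-subtree of $v$ is ``the maximal subtree rooted at $v$ of the forest $S[Q]$,'' so I should confirm that $v$ actually has a well-defined $Q$-subtree, i.e.\ that $v\in Q$ — which holds by assumption — and that this subtree contains at least $v$ itself. The containment of the $A$-subtree in the $Q$-subtree is the one step that deserves an explicit sentence: it follows because every edge of $S[A]$ is an edge of $S[Q]$ (as $A\subseteq Q$), so the connected component of $v$ in $S[A]$, restricted to descendants of $v$, is a subset of the corresponding component in $S[Q]$. I do not expect any real obstacle here; the lemma is essentially a definitional monotonicity fact, and the only thing to get right is the direction of the containment (larger node set $\Rightarrow$ larger subtree $\Rightarrow$ \emph{weaker} constraint on which nodes $v$ must dominate, so strong stability is preserved under union).
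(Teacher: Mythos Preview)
Your proof is correct and follows essentially the same approach as the paper: take $v\in A$ without loss of generality, observe that $A\subseteq Q$ forces the $A$-subtree of $v$ to sit inside the $Q$-subtree of $v$, hence $V\setminus V(F_Q)\subseteq V\setminus V(F_A)$, and apply strong stability on $A$. Your write-up is in fact slightly more careful than the paper's in spelling out why the subtree containment holds.
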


\begin{proof}
Without loss of generality, take a vertex $v\in A \subseteq Q$.
Let $F_A$ and $F_Q$ be the (maximal) $A$-subtree and $Q$-subtree of
$v$ in $S$, respectively. 
Then $V(F_A)\subseteq V(F_Q)$ because $A\subseteq Q$.
By the strong stability property of $S$ on $A$, we have that 
$v$ prefers its parent $w$ in $S$ to any other node in 
$V\setminus V(F_A)$.
But, $V\setminus V(F_Q) \subseteq V\setminus V(F_A)$. It follows that
$S$ is strongly stable on $Q=A\cup{B}$. 
\end{proof}

The next lemma proves an important property of a skeleton of a tree $T$. 
\begin{lemma}[Skeleton Lemma]
\label{lmm:skeleton}
Let $T$ be any tree. Let
$S$ be a spanning tree that is strongly stable on a set of
vertices $\mathbb{O}$ and be a skeleton of $T$.
Let $\calO=V\setminus V(T)$.
Then, for any spanning tree $T'$ such that $T\subseteq T'$, 
the tree $T'$ is strongly stable on $\mathbb{O}\setminus\calO$.
\end{lemma}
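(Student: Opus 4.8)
The plan is to transfer the strong stability of $S$ on $\mathbb{O}$ to $T'$ on $\mathbb{O}\setminus\calO=\mathbb{O}\cap V(T)$ one node at a time, using the skeleton hypothesis to argue that, although $T'$ may look nothing like $S$ away from $T$, the part of $S$ that governs a node of $\mathbb{O}\cap V(T)$ is literally frozen inside $T$, hence inside $T'\supseteq T$. So fix a node $v\in\mathbb{O}\cap V(T)$ (we may assume $v\neq r$, since strong stability only constrains nodes that pick a next hop), and let $F$ be the connected component (maximal subtree) of the forest $S[\mathbb{O}]$ containing $v$.

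First I would exploit the skeleton property on $F$: since $v\in V(F)\cap V(T)$, the option ``$T$ contains no node of $F$'' is impossible, so $T$ contains $\mathcal{A}^+_{S}(F)$, and therefore so does $T'$. From this I would extract two facts. (a)~The unique arc of $S$ leaving $v$, say $(v,w)$, lies in $\mathcal{A}^+_{S}(F)\subseteq T'$; as $T'$ is an arborescence, $w$ is exactly the parent of $v$ in $T'$, so the $S$-parent $w$ is the only candidate for $v$'s best choice in $T'$. (b)~Let $X$ be the $\mathbb{O}$-subtree of $v$ in $S$, i.e.\ $v$ together with the descendants reachable from $v$ along $S$-paths that stay inside $\mathbb{O}$. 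Then $X\subseteq V(F)$, every node of $X$ lies in $\mathbb{O}$, and for every $u\in X\setminus\{v\}$ the $S$-parent of $u$ again lies in $X$. All the arcs $(u,\mathrm{parent}_{S}(u))$ with $u\in X$ belong to $\mathcal{A}^+_{S}(F)\subseteq T'$, so (i)~each node of $X$ is the tail of an arc of $T$, whence $X\subseteq V(T)$ and thus $X\subseteq\mathbb{O}\cap V(T)$, and (ii)~inside $T'$ these arcs exhibit $X$ as a subtree rooted at $v$. Combining (i) and (ii), $X$ is contained in the $(\mathbb{O}\setminus\calO)$-subtree of $v$ in $T'$.

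To finish I would invoke strong stability of $S$ on $\mathbb{O}$: $v$ prefers its $S$-parent $w$ to every node outside $X$. Since $X$ is contained in the $(\mathbb{O}\setminus\calO)$-subtree of $v$ in $T'$, any node lying outside that $T'$-subtree also lies outside $X$; hence $v$ prefers $w$ — which by (a) is its parent in $T'$ — to every node outside its $(\mathbb{O}\setminus\calO)$-subtree in $T'$. As $v$ was an arbitrary node of $\mathbb{O}\cap V(T)=\mathbb{O}\setminus\calO$, this is precisely strong stability of $T'$ on $\mathbb{O}\setminus\calO$.

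The main obstacle is controlling $v$'s subtree in $T'$: a priori $T'$ may reroute everything outside $T$, so the $(\mathbb{O}\setminus\calO)$-subtree of $v$ in $T'$ need not resemble the $\mathbb{O}$-subtree of $v$ in $S$, and without that link the $S$-preference inequality does not transfer. The skeleton hypothesis is exactly the device that removes this obstacle — as soon as $T$ meets a component $F$ of $S[\mathbb{O}]$ it must contain all of $\mathcal{A}^+_{S}(F)$, which at once pins down the parent pointer of $v$ and forces $X$ to survive inside $T'$ as part of $v$'s $T'$-subtree. I would also want to be careful about the mild point that $X$ is the $v$-rooted piece of $F$ rather than all of $F$, so only $X$ (not $V(F)$) is guaranteed to lie below $v$ in $T'$; this is harmless because strong stability of $S$ is stated in terms of the $v$-rooted subtree $X$ in the first place. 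Everything else is bookkeeping.
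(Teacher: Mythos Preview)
Your proposal is correct and follows essentially the same approach as the paper: invoke the skeleton property on the component $F$ of $S[\mathbb{O}]$ containing $v$ to pull $\mathcal{A}^+_S(F)$ into $T\subseteq T'$, then transfer the strong-stability inequality of $S$ on $\mathbb{O}$. Your write-up is actually more careful than the paper's own proof---you explicitly separate the component $F$ from the $v$-rooted $\mathbb{O}$-subtree $X$, and you argue both that the $S$-parent of $v$ is the $T'$-parent and that $X$ sits inside the $(\mathbb{O}\setminus\calO)$-subtree of $v$ in $T'$, whereas the paper's proof ends (apparently by a slip) with a statement about $S$ rather than $T'$.
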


\begin{proof}
Consider any node $v\in\mathbb{O}\setminus\calO$.
Let $F_v$ be the $\mathbb{O}$-subtree of $v$ in $S$. 
By the definition of skeleton, 
for any (maximal) subtree $F\subseteq S[\mathbb{O}]$, either
(i)~$\mathcal{A}^+_{S}(F)\subseteq T$ or 
(ii)~$V(F)\cap V(T)=\emptyset$.
Since $v\in V(T)\cap  \mathbb{O}$, it must be that $\mathcal{A}^+_{S}(F_v) \subseteq T\subseteq T'$.
Thus, $V(F_v)\subseteq \mathbb{O}\setminus\calO$ as
$V(T)=V-\calO$. 
Therefore $F_v$ is also the $(\mathbb{O}\setminus\calO)$-subtree of $v$ in $S$.
By the strong stability property of $S$ on $\mathbb{O}$, we know that
$v$ prefers its parent $w$ in $S$ to any node in $V\setminus V(F_v)$. 
So, $S$ has the strong stability property on
$\mathbb{O}\setminus\calO$.
\end{proof}

The next lemma shows the correctness of the procedure
Stabilise$(T^{in},S^{in},\mathbb{O})$. 

\begin{lemma} \label{lem:ext-stability}
The procedure FindStable$(T^{in},S^{in},\mathbb{O})$ outputs a spanning tree $S^{out}$ with the strong stability property on $\mathbb{O}\cup\calO$.
\end{lemma}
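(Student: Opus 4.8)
The plan is to prove Lemma~\ref{lem:ext-stability} by tracking two invariants through the $|\calO|$ iterations of the \texttt{for} loop in \texttt{FindStable}: first, that $S^{out}$ always remains a spanning tree (so that the arc replacement in Step~7 is well-defined and never creates a cycle or disconnects a node from $r$); and second, that the set of nodes whose current parent choice in $S^{out}$ already equals their ``best outside the $\calO$-subtree'' choice only grows. Let me write $\mathcal{C}_t$ for the residual forest of not-yet-processed nodes of $\calO$, as in the procedure, and let $D_t = \calO \setminus V(\mathcal{C}_t)$ be the set of nodes of $\calO$ already handled by the start of iteration $t$. I would aim to show the invariant: at the start of iteration $t$, $S^{out}$ is a spanning tree, it contains $T^{in}$, and it is strongly stable on $\mathbb{O} \cup D_t$.

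First I would handle the base case $t=1$. Here $S^{out} = T^{in} \cup \mathcal{A}^+_{S^{in}}(\calO)$ and $D_1 = \emptyset$. I need $S^{out}$ to be a spanning tree: every node of $\calO$ has exactly one outgoing arc (inherited from $S^{in}$ via $\mathcal{A}^+_{S^{in}}(\calO)$), every node of $V(T^{in})$ has its $T^{in}$-arc, and since $S^{in}$ is a skeleton of $T^{in}$, the arcs $\mathcal{A}^+_{S^{in}}(\calO)$ glue consistently onto $T^{in}$ without creating cycles — this is essentially the content of the skeleton condition (each maximal subtree $F \subseteq S^{in}[\mathbb{O}]$ with a node outside $T^{in}$ has all of $\mathcal{A}^+_{S^{in}}(F)$ available, so the $\calO$-part of $S^{in}$ hangs off $T^{in}$ coherently). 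For strong stability on $\mathbb{O}$: apply the Skeleton Lemma (Lemma~\ref{lmm:skeleton}) with $T = T^{in}$, $S = S^{in}$, and $T' = S^{out}$ (noting $T^{in} \subseteq S^{out}$), which gives that $S^{out}$ is strongly stable on $\mathbb{O} \setminus \calO$; then, separately, the nodes of $\calO$ still have their $S^{in}$-arcs and their $\calO$-subtrees in $S^{out}$ coincide with their $\mathbb{O}$-subtrees in $S^{in}$ restricted appropriately, so $S^{in}$'s strong stability on $\mathbb{O}$ transfers; combine the two via the Union Lemma (Lemma~\ref{lmm:union-of-stability}).

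Next, the inductive step. In iteration $t$ we pick a leaf $v$ of $\mathcal{C}_t$ and re-point it to $w$, the best node for $v$ outside its $\calO$-subtree in the current $S^{out}$. Because $v$ is a leaf of $\mathcal{C}_t$, its $\calO$-subtree in $S^{out}$ is just $\{v\}$ together with nodes already in $D_t$ hanging below it — I'd check that re-pointing $v$ to a node outside this subtree keeps $S^{out}$ a spanning tree (no cycle, since $w$ is not a descendant of $v$; still spanning, since we only moved one arc). The node $v$ now satisfies its strong-stability condition on $\mathbb{O} \cup D_{t+1}$ by the very choice of $w$. For the nodes already in $\mathbb{O} \cup D_t$: their $(\mathbb{O} \cup D_{t+1})$-subtrees can only have changed by possibly absorbing $v$'s old position, but since $v$ was a leaf of $\mathcal{C}_t$ and moving it can only shrink or leave unchanged the relevant subtrees of the others (the key monotonicity point — cf. the subtree-inclusion reasoning in the Union Lemma), their preference-for-parent conclusions are preserved. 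I expect the main obstacle to be exactly this last bookkeeping: verifying carefully that processing $v$ as a leaf of $\mathcal{C}_t$ (rather than in arbitrary order) guarantees that no previously-stabilised node's $\calO$-subtree \emph{grows} in a way that could invalidate its choice, and that $w$ is genuinely available, i.e. lies outside $v$'s $\calO$-subtree, so that the replacement is legal. Once the invariant survives all $|\calO|$ iterations, at termination $D_{|\calO|+1} = \calO$ and $S^{out}$ is strongly stable on $\mathbb{O} \cup \calO$, which is the claim; since $V(T^{in}) \supseteq$ all clear nodes and $\mathbb{O} \cup \calO \supseteq V \setminus \{r\}$ in the intended application, this also yields a \emph{stable} spanning tree.
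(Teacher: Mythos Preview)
Your plan is essentially the paper's proof: show $S^{out}$ stays a spanning tree, invoke the Skeleton Lemma for $\mathbb{O}\setminus\calO$, run an induction over the leaf-removal order for $\calO$, and combine via the Union Lemma. Two small corrections. First, the initial $S^{out}=T^{in}\cup\mathcal{A}^+_{S^{in}}(\calO)$ is a spanning tree simply because $S^{in}$ is one (each component of $\mathcal{A}^+_{S^{in}}(\calO)$ is an arborescence with a single arc into $V(T^{in})$); the skeleton condition is not needed here, though it \emph{is} what makes your base-case claim that $\mathbb{O}$-subtrees in $S^{in}$ and $S^{out}$ coincide go through. Second, in your inductive step the monotonicity runs the other way: re-pointing $v$ can \emph{enlarge} (not shrink) the $\calO$-subtree of a previously processed $u\in D_t$ (namely when $w$ lies below $u$), and it is enlargement that preserves the strong-stability inequality, since ``outside the subtree'' then becomes a smaller set. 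The reason no shrinkage occurs for $u\in D_t$ is that $v$ cannot have been in $u$'s $\calO$-subtree to begin with --- the leaf-removal order forces $v$'s $\calO$-parent to lie in $\mathcal{C}_t\cup V(T^{in})$, never in $D_t$. The paper streamlines exactly this bookkeeping by carrying only the invariant ``strongly stable on $D_t$'' through the loop (rather than on $\mathbb{O}\cup D_t$) and observing once that, for $v\in\calO$, the $\calO$-subtree equals the full set of descendants; the combination with $\mathbb{O}\setminus\calO$ is deferred to a single Union-Lemma application at the end.
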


\begin{proof}



To begin, we show that $S^{out}$ is a spanning tree throughout the
procedure. Initially $S^{out}=T^{in}\cup\mathcal{A}^+_{S^{in}}(\calO)$. 
Therefore, $S^{out}$ contains every node, since $\calO=V-V(T^{in})$.
Let us see that $S^{out}$ is also connected.
As $S^{in}$ is a spanning tree, we have that each component $F$ of $\mathcal{A}^{+}_{S^{in}}(\calO)$ is a 
tree (arborescence). Furthermore as the sink $r$ is not in $\calO$, there is exactly one
arc in $\mathcal{A}^{+}_{S^{in}}(\calO)$ leaving $F$ (from its root) and entering $V(T^{in})$.
Thus, $S^{out}=T^{in}\cup\mathcal{A}^+_{S^{in}}(\calO)$ is a spanning tree.


Now, consider how $S^{out}$ changes during the loop phase of the procedure. 
No node in $V-\calO$ is considered during this phase, so $S^{out}$ never contains 
any arc leaving $V\setminus\calO$ and entering $\calO$.
As a result, the $\calO$-subtree of $v$ in $S^{out}$ coincides exactly with the
set of all descendants of $v$ in $S^{out}$.
Hence, in Step~\ref{step:change-nexthop}, node $v$ never selects a descendant node to be $w$. 
So, we can safely replace the arc $(v,y)\in S^{out}$ by the arc $(v,w)$ without
creating a cycle. This shows that $S^{out}$ is always a spanning tree.

Next, we show that $S^{out}$ has the strong stability property 
both on $\mathbb{O}\setminus\calO$ and on $\calO$. By Lemma~\ref{lmm:union-of-stability}, this will
imply that $S^{out}$ is strongly stable on $\mathbb{O}\cup\calO$.
Now, $S^{in}$ is strongly stable on $\mathbb{O}$ and is a skeleton on $T^{in}$.
Furthermore, $T^{in}\subseteq S^{out}$ by construction. Thus, applying Lemma~\ref{lmm:skeleton},
we have that $S^{out}$ is strongly stable on $\mathbb{O}\setminus\calO$. 
It only remains to show that $S^{out}$ is strongly stable on $\calO$.
To achieve this, we show by induction that $S^{out}$ is strongly stable on
$\calL_{t}=\calO\setminus V(C_t)$ in each iteration $t$. (Note that, on termination, $\calL_t=\calO$.)
This is true for $t=1$ as $\calL_1=\emptyset$.
Now, consider iteration $t>1$, and assume that strong stability holds on $\calL_{t-1}$. 
Observe that no node $u\in\calO\setminus\calL_{t-1}$ has a parent
$x\in\calL_{t-1}$; otherwise, $x$ would have not been added to
$\calL_{t-1}$. 
Since $v$ is a leaf of $C_t$, all the nodes in the $\calO$-subtree of $v$ 
in $S^{out}$ must be in $\calL_{t-1}$.
Because nodes in $\calL_{t-1}$ can not change their parents after this time,
every descendant of $v$ in $\calO$ will remain a descendant of $v$.
Consequently, $v$ prefers $w$ to other any non-descendant in $S^{out}$
throughout the rest of the procedure.
Thus, $S^{out}$ is strongly stable on $\calL_t$.
\end{proof}

\subsection{Routing Every Packet in $n$ Rounds.}

We are now ready to present an algorithm that routes every
packet in $n$ rounds (recall that each round consists of a 
single fair-activation sequence).
In addition to the procedure FindStable(), two procedures (namely, Procedures~\ref{algo:bfs} and \ref{algo:rev-bfs}) 
based upon a breath-first-search (BFS)
algorithm are our basic building block for generating an activation sequence. 
Given a spanning tree $F$ and a set of nodes $U\subseteq V$, 
the procedure BFS$(U,F)$ activates the
nodes of $U$ in {\em breadth-first-search (BFS) order}. That is, the nodes of $U$ are activated in increasing
order of distance to the sink $r$ in $F$.

\begin{algorithm}
\caption{BFS$(U,F)$}
\label{algo:bfs}
\begin{algorithmic}[1]
\REQUIRE A set $U\subseteq V$ and a spanning tree $F$.
\STATE Let $v_1,v_2,\ldots,v_q$ be nodes in $U\setminus\{r\}$
       sorted in increasing order of distance to the root $r$ of $F$. 
\FOR{$i := 1$ to $q$}
  \STATE Activate $v_i$.
\ENDFOR
\end{algorithmic}
\end{algorithm}

Similarly, the procedure reverse-BFS$(U,F)$ activates the
nodes of $U$ in breadth-first-search (BFS) reverse-order.
\begin{algorithm}
\caption{reverse-BFS$(U,F)$}
\label{algo:rev-bfs}
\begin{algorithmic}[1]
\REQUIRE A set $U\subseteq V$ and a spanning tree $F$.
\STATE Let $v_1,v_2,\ldots,v_q$ be nodes in $U\setminus\{r\}$
       sorted in increasing order of distance to the root $r$ of $F$. 
\FOR{$i := q$ to $1$}
  \STATE Activate $v_i$.
\ENDFOR
\end{algorithmic}
\end{algorithm}

Over the course of these $n$ rounds, the main algorithm (Procedure~\ref{algo:fair-stabilize})  utilises these 
three procedures on the following two classes of nodes.
\begin{enumerate}[(1)]
\item The set of nodes that have been clear in {\em every} round up to
  time~$t$, denoted by $\mathbb{K}_t =\cap_{i=1}^t \calK_i$.
\item The complement of $\mathbb{K}_t$, which is the set of nodes that
  have been opaque at least once by time~$t$, denoted by $\mathbb{O}_t =\cup_{i=1}^t\calO_i$.
\end{enumerate}

\begin{algorithm}
\caption{Fair-Stabilise()}
\label{algo:fair-stabilize}
\begin{algorithmic}[1]
\STATE Let $T_t$ be the sink-component at the beginning of round $t$. 
\STATE Initialise $S_0 := \mbox{an arbitrary spanning tree}$, 
       $\mathbb{K}_0:=V$ and 
       $\mathbb{O}_0:=\emptyset$. 
\FOR {round $t := 1$ to $n$}
\STATE Apply FindStable($T_t,S_{t-1},\mathbb{O}_{t-1}$) to compute a
       spanning tree $S_t$.
\STATE Update $\mathbb{O}_t:=\mathbb{O}_{t-1}\cup\calO_t$ and 
       $\mathbb{K}_t:=V\setminus\mathbb{O}_t$. 
\STATE Activate BFS$(\mathbb{O}_t, S_t)$.
\STATE Activate reverse-BFS$(\mathbb{K}_t, S_t)$.
\STATE Pick a node $v^*$ that is the first node activated by reverse-BFS$(\mathbb{K}_t, S_t)$.
\STATE Replace the arc $(v^*,y)$ in $S_t$ by $(v^*,w)$ the arc chosen by $v^*$ in the routing graph.
\STATE Update $\mathbb{O}_t:=\mathbb{O}_t\cup\{v^*\}$. 
\ENDFOR
\end{algorithmic}
\end{algorithm}

Observe that Procedure~\ref{algo:fair-stabilize} is clearly fair because $\mathbb{K}_t$ and
$\mathbb{O}_t$ partition the set of nodes.
%
%
The basic intuition behind the method is that if we can make our routing graph $T_{t+1}$ look like the
spanning tree $S_t$, then every packet will route.
Typically, any activation sequence that attempts to do this, though,
will induce inconsistencies.
This, in turn, will force nodes to go opaque.
But, it turns out that we can make those nodes in $\mathbb{O}_t$ choose
arcs in accordance with $S_t$ via the use of BFS$(\mathbb{O}_t, S_t)$.
(It is not at all obvious that this can be done because nodes in
$\mathbb{O}_t$ may actually be clear, that is, they need not be in
$\calO_t$.) 

Then the question becomes how do we keep track of the packets.
The key point is that nodes in $\mathbb{O}_t$ choose arcs in
accordance with the spanning tree $S_t$.
Therefore, since $\mathbb{O}_1 \subset \mathbb{O}_2 \subset \mathbb{O}_3 \subset
\cdots$ and the containments are strict, we will eventually have
$\mathbb{O}_t=V$, and our routing graph will be a spanning tree.
Thus, every packet routes!
Moreover, the strong stability property of $S_t$ on $\mathbb{O}_t=V$
also implies that the final routing graph is a stable spanning tree.

The following lemma presents the key properties we need to prove all this. 

\begin{lemma}\label{lmm:main-props}
At the end of round $t$ of Fair-Stabilise(), we have that:\\
$\bullet$ The spanning tree $S_t$ has the strong stability on $\mathbb{O}_t$,
and\\
$\bullet$  $\mathcal{A}^+_{S_t}(\mathbb{O}_t)$ is contained in the routing graph at the end of round. Specifically, for each 
node $v\in\mathbb{O}_t$, if $(v,w)\in S_t$ then, upon activation, $v$ chooses $w$ as its next hop. 
\end{lemma}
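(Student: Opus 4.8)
The plan is to induct on the round number $t$, with the base case $t=0$ being immediate from the initialisation ($\mathbb{O}_0=\emptyset$, so both claims are vacuous). For the inductive step I would first observe that the call FindStable$(T_t,S_{t-1},\mathbb{O}_{t-1})$ is legitimate: by the inductive hypothesis $S_{t-1}$ is strongly stable on $\mathbb{O}_{t-1}$, and I would need to check that $S_{t-1}$ is a skeleton of $T_t$ — this is exactly the property that lets FindStable extend strong stability to $\mathbb{O}_{t-1}\cup\calO_t$. Granting that, Lemma~\ref{lem:ext-stability} hands us a spanning tree $S_t$ that is strongly stable on $\mathbb{O}_{t-1}\cup\calO_t$, which (after the update $\mathbb{O}_t:=\mathbb{O}_{t-1}\cup\calO_t$, and then the extra $v^*$) is strong stability on $\mathbb{O}_t$; I would need to re-verify that adjoining $v^*$ in the last three lines of the round preserves strong stability, using that $v^*$ is the reverse-BFS-first node of $\mathbb{K}_t$ (so it has no descendants in $\mathbb{O}_t$ in $S_t$) and that we redirect its arc to the choice it actually made in the routing graph.

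The second bullet is the heart of the argument. I would analyse the effect of Activate BFS$(\mathbb{O}_t,S_t)$ followed by reverse-BFS$(\mathbb{K}_t,S_t)$. The claim is that when a node $v\in\mathbb{O}_t$ is activated in BFS order, every node strictly closer to $r$ in $S_t$ has already been activated this round and currently sits on its $S_t$-arc, so in particular the parent $w$ of $v$ in $S_t$ is clear with routing path equal to its $S_t$-path; meanwhile the nodes in $v$'s $\mathbb{O}_t$-subtree in $S_t$ have not yet been activated (they are farther from $r$), so any "tempting" clear node that could distract $v$ must lie outside $v$'s $\mathbb{O}_t$-subtree — and the strong stability property on $\mathbb{O}_t$ says $v$ prefers $w$ to every such node, provided $w$ is valid. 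Validity is where the filtering list $\calD(v)=\{v\}$ enters: I must check $v\notin\calP(w)$, which holds because $w$'s current routing path is its $S_t$-path and $v$ lies in $w$'s $S_t$-subtree, not on the path from $w$ to $r$. Hence $v$ chooses $w$, establishing $\mathcal{A}^+_{S_t}(\mathbb{O}_t)\subseteq$ routing graph at that moment; then I would argue reverse-BFS$(\mathbb{K}_t,S_t)$ and the single arc redirection of $v^*$ do not disturb these choices, since the $\mathbb{K}_t$-nodes and $v^*$ are activated only afterwards and activating a node cannot change another node's already-chosen arc within the same round.

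The main obstacle I anticipate is the interplay between "opaque in some earlier round" ($v\in\mathbb{O}_t$) and "actually opaque now" ($v\in\calO_t$): a node in $\mathbb{O}_t\setminus\calO_t$ is currently clear, so its routing path is a genuine $v,r$-path, and one has to argue that forcing it back onto its $S_t$-arc is still its best valid choice and does not create a cycle. The strong stability property is precisely engineered to handle this — it guarantees $v$ prefers its $S_t$-parent even when allowed to pick clear nodes outside its $\mathbb{O}_t$-subtree — but stitching this together with the BFS activation order (so that the relevant nodes are in the right clear/opaque state and on the right arcs at the instant $v$ is activated) is the delicate bookkeeping. A secondary subtlety is proving the skeleton hypothesis is maintained round to round, i.e.\ that the routing graph $T_{t+1}$ produced really does contain $\mathcal{A}^+_{S_t}(F)$ for every maximal subtree $F\subseteq S_t[\mathbb{O}_t]$ or none of $F$; this should follow from the second bullet together with the fact that packets/arcs of clear $\mathbb{O}_t$-subtrees hang together, but it needs to be stated carefully so the induction closes.
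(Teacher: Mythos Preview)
Your plan matches the paper's approach---induction on $t$, deriving the skeleton property of $S_{t-1}$ on $T_t$ from the previous round's second bullet, then an inner induction on the BFS order---but two details are off. First, the parenthetical about $v^*$ is inverted: being first in reverse-BFS$(\mathbb{K}_t,S_t)$ means $v^*$ is the \emph{farthest} $\mathbb{K}_t$-node from $r$, so every descendant of $v^*$ in $S_t$ lies in $\mathbb{O}_t$, not none. This is exactly what makes the $(\mathbb{O}_t\cup\{v^*\})$-subtree of $v^*$ coincide with its full $S_t$-subtree, so that the choice $w$ that $v^*$ makes in the routing graph (which at that moment equals $S_t$, hence $w$ is $v^*$'s best non-descendant) witnesses strong stability at $v^*$ after the arc swap.

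Second, the inference ``nodes in $v$'s $\mathbb{O}_t$-subtree have not yet been activated, so any tempting clear node must lie outside'' is a non-sequitur: an unactivated node can be clear---indeed every node of $\mathbb{O}_t\setminus\calO_t$ is clear at the start of the round. The missing ingredient, which the paper uses, is the identity $S_t[V\setminus\calO_t]=T_t$ (FindStable only rewires arcs leaving $\calO_t$). From it, any clear node $y$ in $v$'s $\mathbb{O}_t$-subtree already sits on its $S_t$-arc before any activation, so its routing path is its $S_t$-path and therefore contains $v$; hence $y$ is invalid for $v$ because $\calD(v)=\{v\}$. The same identity also repairs your claim that ``every node strictly closer to $r$ has already been activated'': $\mathbb{K}_t$-nodes closer to $r$ have not been activated in the BFS phase, but they are nonetheless on their $S_t$-arcs because $\mathbb{K}_t\subseteq V\setminus\calO_t$.
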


\begin{proof}
We proceed by induction on $t$. The statement is clearly true for $t=0$
because $\mathbb{O}_0=\emptyset$.
Now, suppose that the statements hold up to round $t-1$ for some $t>0$. 
We first show that $S_t$ has a strong stability property on
$\mathbb{O}_t$ (before adding $v^*$).
To do this, we have to show that FindStable($T_t,S_{t-1},\mathbb{O}_{t-1}$)
is called with a valid input, i.e., 
$S_{t-1}$ is a skeleton of $T_{t}$ and has the strong stability property on $\mathbb{O}_{t-1}$.
The latter fact, that $S_{t-1}$ has the strong stability property on $\mathbb{O}_{t-1}$, follows from the 
induction hypothesis. 

So, we need to show $S_{t-1}$ is a skeleton of $T_{t}$.
By induction, at the end of the previous round (before the learning phase)
we have that $\mathcal{A}^+_{S_{t-1}}(\mathbb{O}_{t-1})$ is
contained in the routing graph.  
So, for each (maximal) subtree 
$F\subseteq S_{t-1}[\mathbb{O}_{t-1}]$ rooted at a node $v$, 
every node of $F$ has a path to the sink $r$ if and only if $v$ has a
path to $r$. 
Thus, either $\mathcal{A}^+_{S_{t-1}}(F)$ is contained in the sink
component $T_{t}$ or no node of $F$ is in the sink component $T_{t}$. 
In other words, $S_{t-1}$ is a skeleton of $T_{t}$, as desired.
Consequently, by Lemma~\ref{lem:ext-stability}, FindStable($T_t,S_{t-1},\mathbb{O}_{t-1}$)
builds a spanning tree $S_t$ that is 
strongly stable on $\mathbb{O}_t$ (before the arc emanating from $v^*$ is updated in Step 8).

Next, we claim that after applying BFS$(\mathbb{O}_t, S_t)$, 
the routing graph looks exactly like $S_t$. 
This is true for all nodes outside $\mathbb{O}_t$ by the construction
of $S_t$.
To see this, observe that the tree $S_t$ is constructed from 
FindStable($T_t,S_{t-1},\mathbb{O}_{t-1}$) by adding arcs joining
opaque nodes in $\calO_t\subseteq\mathbb{O}_t$ to $V(T_t)$. 
Since no nodes in $V\setminus\mathbb{O}_t$ have been activated at this
point, we must have that
$T_t[V\setminus\mathbb{O}_t]\subseteq 
 T_t[V\setminus\calO_t] = S_t[V\setminus\calO_t]$.
For the remaining nodes, we proceed by induction on the order in which nodes
are activated by BFS$(\mathbb{O}_t, S_t)$. 
Our induction hypothesis is that every node activated during this time
will choose its parent in $S_t$ as its next hop and become
consistent, so it has a ``real'' path as its chosen route. 

Let us prove the base case. 
Consider the first node $x$ activated by
BFS$(\mathbb{O}_t, S_t)$. Suppose $x$ chooses $z$ in the routing graph,
that is, $(x,z)\in T_{t}$. 
At the time we activate $v^*$, we have 
(1)~$T_t[V\setminus\calO_t] = S_t[V\setminus\calO_t]$ and
(2)~$z$ is clear and is in $V\setminus\mathbb{O}_t$ (because the sink
node $r$ is in $V\setminus\mathbb{O}_t$).
There are two cases. 
First, if $x$ is opaque, then $z$ is valid for $x$ because 
every node now has a real path as its chosen route. 
Second, if $x$ is clear, then $z$ cannot have $x$ in its chosen route
because $(x,z)$ is in the routing graph. 
Also, each node in the $\mathbb{O}_t$-subtree of $x$ in
$S_t$ either is a descendant of $x$ or is opaque
by (1). Thus, by the strong stability property of $S_t$ on $\calO_t$, 
$z$ is the best valid choice of $x$.
Consequently, $x$ must choose $z$ and must be consistent upon
activation because $z$ has a real path as its chosen route. 

Inductively, assume that every node activated before $x$ by 
BFS$(\mathbb{O}_t,S_t)$ chooses its parent in $S_t$ as its next hop 
and is consistent at the time we activate $x$.  
Now activate $x$ and let $Q_0$ be the set of nodes
already activated by BFS$(\mathbb{O}_t, S_t)$. 
Then $R[Q]=S_t[Q]$ where $R$ is the current routing graph and 
$Q=(V\setminus\calO)\cup Q_0$.
Thus, by induction, the node $x$ must have a parent $z\in Q$ which is clear and
consistent. 
But this situation is now similar to case where we activated the first node 
(with $Q$ replacing $V\setminus\mathbb{O}_t$).
Therefore, we conclude that $x$ must choose $z$ and become consistent.
This proves the claim that the routing graph looks like $S_t$. 

Finally, consider the node $v^*$. 
By the choice of $v^*$, all the descendants of $v^*$ in $S_t$ must be in $\mathbb{O}_t$.
Since the routing graph now looks exactly like $S_t$ and every node is
consistent, $v^*$ can choose any node outside its $\mathbb{O}_t$-subtree
in $S_t$.
Consequently, as $w$ is the best valid choice of $v^*$, the modification of 
the tree $S_t$ (replacing $(v^*,y)$ by $(v^*,w)$) results in a tree
$S_t$ that is strongly stable on $\mathbb{O}_t$ after adding $v^*$. 
Also, since $(v^*,w)$ is now in both the routing graph and $S_t$,
both statements hold at the end of the iteration.
\end{proof}

Lemma~\ref{lmm:main-props} produces a guarantee that the cardinality 
of $\mathbb{O}_t$ is increasing.
\begin{lemma} \label{lem:stab-size-inc}
The cardinality of $\mathbb{O}_t$ strictly increases with each round $t$,
until $\mathbb{O}_t=V$. Furthermore if some packet does not route in 
round $t-1$, then $|\mathbb{O}_t|\geq |\mathbb{O}_{t-1}|+3$. 
\end{lemma}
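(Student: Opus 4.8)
The plan is to show that $\mathbb{O}_t$ grows by at least one node each round via the explicit node $v^*$, and grows by at least three nodes whenever a packet fails to route. The first claim is essentially built into the algorithm: in Step~8--10 of Fair-Stabilise() we always add $v^*$ to $\mathbb{O}_t$, and $v^*$ is chosen as the first node activated by reverse-BFS$(\mathbb{K}_t,S_t)$, i.e. a node of $\mathbb{K}_t=V\setminus\mathbb{O}_t$ at maximum distance from $r$ in $S_t$. So as long as $\mathbb{K}_t\neq\emptyset$ such a $v^*$ exists and $|\mathbb{O}_t|$ strictly increases; once $\mathbb{O}_t=V$ there is nothing left to do. I would state this as the easy half and dispatch it in a couple of sentences, flagging that $r\in\mathbb{K}_t$ always (since $r$ is never opaque), so the growth really is by exactly the newly-absorbed $\calO_t\cup\{v^*\}$ and we need $\calO_t\cup\{v^*\}\neq\emptyset$, which holds unless everything is already in $\mathbb{O}_{t-1}$.

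For the main half, suppose some packet does not route in round $t-1$. By Lemma~\ref{lmm:main-props}, at the end of round $t-1$ the routing graph agrees with $S_{t-1}$ on $\mathcal{A}^+_{S_{t-1}}(\mathbb{O}_{t-1})$, and the nodes of $\mathbb{O}_{t-1}$ are correctly routed along $S_{t-1}$ into the sink-component structure; so a packet that fails to route must be sitting at a node of $\mathbb{K}_{t-1}=V\setminus\mathbb{O}_{t-1}$ whose $S_{t-1}$-path to $r$ is broken, i.e. it is stuck in a cycle formed among nodes of $\mathbb{K}_{t-1}$ in the routing graph $T_t$ at the start of round $t$. Because the "Not-Me!" filter forbids a node from choosing a path through itself, this cycle cannot be a single self-loop; and because each node has a path to the sink in the all-choice graph, the cycle cannot consist of a clear node pointing to... — the point I want to extract is that this cycle lies entirely outside the sink-component $T_t$, so all of its nodes land in $\calO_t\subseteq\mathbb{O}_t$. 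A directed cycle in a functional (out-degree-one) graph, with no self-loops allowed, has at least two nodes; to get three I would argue that a 2-cycle is also impossible under "Not-Me!" reasoning combined with the structure of $S_t$/the activation order — and if even that only yields two, the third node is $v^*$ itself, which is added to $\mathbb{O}_t$ in Step~10 and is distinct from the stuck nodes (it lies in $\mathbb{K}_t$ before Step~9, hence was not among the opaque cycle nodes). Combining: $\mathbb{O}_t \supseteq \mathbb{O}_{t-1} \cup \{\text{two cycle nodes}\} \cup \{v^*\}$, a disjoint union, giving $|\mathbb{O}_t|\ge|\mathbb{O}_{t-1}|+3$.

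The main obstacle I anticipate is pinning down exactly why the stuck packet forces \emph{at least two new opaque nodes} rather than just one — that is, ruling out a $1$-cycle (immediate from $\calD(v)=\{v\}$) and, if needed, showing the third node is genuinely $v^*$ and genuinely new. This is where I would need to look carefully at what BFS$(\mathbb{O}_t,S_t)$ followed by reverse-BFS$(\mathbb{K}_t,S_t)$ actually does to the routing graph in round $t$: by Lemma~\ref{lmm:main-props}'s proof the routing graph after BFS$(\mathbb{O}_t,S_t)$ already looks like $S_t$, so any node that is \emph{still} opaque at the end of round $t$ and not equal to $v^*$ must have been opaque at the start, i.e. in $\calO_t$, and $\calO_t\subseteq\mathbb{O}_t$ forces it to choose its $S_t$-parent; hence the only way a packet stays stuck through round $t-1$ is a genuine cycle among at least two nodes of $\calO_{t-1}$, all of which enter $\mathbb{O}_t$. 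Once that structural fact is isolated the counting is routine, so I would spend the bulk of the write-up on that cycle-in-the-routing-graph argument and keep everything else terse.
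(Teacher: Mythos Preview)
Your first half (strict increase via $v^*$) and your counting scheme (two new opaque nodes from the cycle, plus $v^*$, gives three) are both correct and match the paper. The gap is in the structural claim you make about the cycle.

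You assert that at the end of round $t-1$ ``the nodes of $\mathbb{O}_{t-1}$ are correctly routed along $S_{t-1}$ into the sink-component,'' and hence that any cycle is ``formed among nodes of $\mathbb{K}_{t-1}$.'' This is false. Lemma~\ref{lmm:main-props} only guarantees that each $v\in\mathbb{O}_{t-1}$ selects its $S_{t-1}$-parent; it does \emph{not} guarantee that $v$'s $S_{t-1}$-path to $r$ survives, because that path may pass through $\mathbb{K}_{t-1}$-nodes which are re-activated during reverse-BFS and may swing their arcs elsewhere. Concretely, take $v_1,v_2\in\mathbb{K}_{t-1}$ incomparable in $S_{t-1}$, with $c_1\in\mathbb{O}_{t-1}$ a child of $v_2$ and $c_2\in\mathbb{O}_{t-1}$ a child of $v_1$. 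During reverse-BFS, $v_1$ can choose $c_1$ (since $\calP(c_1)$, frozen as $c_1$'s $S_{t-1}$-path, does not contain $v_1$) and then $v_2$ can choose $c_2$ for the symmetric reason, producing the cycle $v_1\to c_1\to v_2\to c_2\to v_1$, which visits $\mathbb{O}_{t-1}$. So your ``cycle lies in $\mathbb{K}_{t-1}$'' shortcut fails, and with it your appeal to ``a cycle has at least two nodes.''

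What actually needs to be shown is that any cycle $C$ in $\calR_{t-1}$ contains at least two nodes of $\mathbb{K}_{t-1}=V\setminus\mathbb{O}_{t-1}$, even though $C$ may weave through $\mathbb{O}_{t-1}$. The paper's argument is: the arcs out of $\mathbb{O}_{t-1}$ form the forest $\mathcal{A}^+_{S_{t-1}}(\mathbb{O}_{t-1})$, so $C$ has an arc $(u,v)$ with $u\in\mathbb{O}_{t-1}$, $v\notin\mathbb{O}_{t-1}$. Let $w$ be $v$'s successor on $C$. If $w\notin\mathbb{O}_{t-1}$ you have your two nodes. Otherwise $w\in\mathbb{O}_{t-1}$, and since $\calP(w)$ was fixed during BFS to be $w$'s $S_{t-1}$-path, the fact that $v$ chose $w$ under Not-Me forces $v\notin\calP(w)$, i.e.\ $w$ is not an $S_{t-1}$-descendant of $v$ (nor of $u$, whose $S_{t-1}$-parent is $v$). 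Hence the $S_{t-1}$-arcs from $w$ inside $\mathbb{O}_{t-1}$ cannot lead back to $u$, so $C$ must exit $\mathbb{O}_{t-1}$ a second time at some $y\neq v$. That yields the two new opaque nodes; the aside about ruling out $2$-cycles is a red herring.
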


\begin{proof}
By construction, we have $|\mathbb{O}_t|\geq |\mathbb{O}_t|+1$. 
Thus, the first statement holds.
To prove the second statement,
it suffices to show that if there is an opaque node in $\calO_t$ then
there are at least two opaque nodes not in $\mathbb{O}_{t-1}$.
This would imply that we add at least three nodes to $\mathbb{O}_t$, 
two nodes from $\calO_t\setminus\mathbb{O}_{t-1}$ and 
the node $v^*$ at the end of the iteration.
So, we have to prove that $|\calO_t\setminus\mathbb{O}_{t-1}|\geq 2$
for all $t\geq 1$. 

Suppose $\calO_t\neq\emptyset$; if not then every packet routed in round $t-1$. Now
consider the routing graph $R_{t-1}$ at the end of round $t-1$ (before the
learning phase). Since there is an opaque node in $\calO_t$, 
the routing graph $R_{t-1}$ must contain a cycle $C$. 
By Lemma~\ref{lmm:main-props},
$\mathcal{A}^+_{S_{t-1}}(\mathbb{O}_{t-1})$ is contained in $R_{t-1}$. 
The subgraph $\mathcal{A}^+_{S_{t-1}}(\mathbb{O}_{t-1})$ is a forest
because
$\mathcal{A}^+_{R_{t-1}}[\mathbb{O}_{t-1}]=\mathcal{A}^+_{S_{t-1}}[\mathbb{O}_{t-1}]$ 
and $S_{t-1}$ is a tree. 
So, at least one vertex of $C$ is not in $\mathbb{O}_{t-1}$.
Thus, we have shown that $|\calO_t\setminus\mathbb{O}_{t-1}|\geq 1$.
If $C$ has no vertex in $\mathbb{O}_{t-1}$, then 
$C$ must have at least two vertices and all of them are in 
$\calO_t\setminus\mathbb{O}_{t-1}$, so we are done.
%
If $C$ has some vertex in $\mathbb{O}_{t-1}$, then $C$ must contain an
arc $(u,v)$ such that $u\in\mathbb{O}_{t-1}$ and 
$v\notin\mathbb{O}_{t-1}$. 
Follow the cycle $C$ starting from the vertex $v$. 
If the node $w$ after $v$ in $C$ is not in $\mathbb{O}_{t-1}$,
then we have found two nodes in $\calO_t\setminus\mathbb{O}_{t-1}$, and we are done. 
So, $w$ is in $\mathbb{O}_{t-1}$. 
Now, continue traversing $C$ from $w$ to $u$.
%
By Lemma~\ref{lmm:main-props}, after applying BFS$(\mathbb{O}_{t-1},S_{t-1})$, 
the forest $\mathcal{A}^+_{S_{t-1}}(\mathbb{O}_{t-1})$ is contained in
the routing graph.
In fact, the routing graph at this point is exactly $S_{t-1}$, and no
nodes change their choice.
Hence, all nodes are consistent after applying BFS$(\mathbb{O}_{t-1},S_{t-1})$.
%
This means that $w$ is not a descendant of $u$ or $v$ in 
the forest $\mathcal{A}^+_{S_{t-1}}(\mathbb{O}_{t-1})$;
otherwise, $w$ would have $u$ and $v$ (in fact, the arc $(u,v)$) in
its chosen route. 
So, the only way $C$ can go from $w$ to $u$ is to leave 
the set of nodes $\mathbb{O}_{t-1}$. 
But then the next node $y$ on $C$ we reach outside 
$\mathbb{O}_{t-1}$
satisfies $y\neq v$. Therefore, again, $C$ has two distinct nodes not in $\mathbb{O}_{t-1}$.
\end{proof}

It is immediate from Lemma~\ref{lem:stab-size-inc} that we can route
every packet in $\lfloor n/3 \rfloor$ rounds, and that the network
becomes stable in $n$ rounds. 
Moreover, we can deduce a stronger failure guarantee.
We say that round $t$ is a {\em imperfect round} if we cannot route
every packet. Then
there can be at most $\lfloor n/3 \rfloor$ imperfect rounds (note that these may not be
consecutive rounds) even if the routing graph is not yet stable. 

\begin{theorem}\label{thm:stable-n-route-n3}
There is an activation sequence that
routes every packet in $\lfloor n/3\rfloor$ rounds,
gives a stable spanning tree in $n$ rounds, 
and guarantees that there are at most $\lfloor n/3\rfloor$ imperfect rounds. 
\end{theorem}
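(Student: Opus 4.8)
The plan is to assemble this theorem directly from Lemma~\ref{lem:stab-size-inc} together with the construction of Procedure~\ref{algo:fair-stabilize}, since nearly all the work has already been done. First I would recall the two facts we have in hand: (a) $|\mathbb{O}_t|$ strictly increases with $t$ until $\mathbb{O}_t=V$, and (b) if some packet fails to route in round $t-1$, then $|\mathbb{O}_t|\geq |\mathbb{O}_{t-1}|+3$. From (a) alone, since $|\mathbb{O}_0|=0$ and each round adds at least one node, we reach $\mathbb{O}_t=V$ within $n$ rounds; and by Lemma~\ref{lmm:main-props}, once $\mathbb{O}_t=V$ the routing graph equals $S_t$ with its arcs $\mathcal{A}^+_{S_t}(V)$ contained in the routing graph, and $S_t$ is strongly stable on $V\setminus\{r\}$ — hence, by the observation in the definitions, $S_t$ is a stable spanning tree. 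This gives the ``$n$ rounds to stability'' clause.

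Next I would handle the $\lfloor n/3\rfloor$-routing clause. The key observation is that in any round $t$ in which every packet \emph{does} route we are already done for those packets (their trajectories never need to be revisited — once a packet reaches the sink it stays there), so it suffices to bound the number of \emph{imperfect} rounds, i.e.\ rounds after which at least one packet is still circulating. By fact (b), each imperfect round $t-1$ forces $|\mathbb{O}_t|$ to jump by at least $3$; combined with the monotonicity from fact (a) and the ceiling $|\mathbb{O}_t|\leq n$, the total number of imperfect rounds is at most $\lfloor n/3\rfloor$. I would phrase this as: let $t_1<t_2<\cdots<t_k$ be the imperfect rounds; then $n\geq |\mathbb{O}_{t_k}|\geq 3k$ (each imperfect round $t_i$ contributes an increment of $\geq 3$ to the eventually-$n$-sized set $\mathbb{O}$, and increments never decrease), so $k\leq \lfloor n/3\rfloor$. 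Since after the last imperfect round every subsequent round routes every packet, in particular by round $\lfloor n/3\rfloor$ (or earlier) all packets have routed. A small care point: one must argue that ``round $t$ is perfect'' is about the packets present at the \emph{start} of round $t$, and that a packet which routed in an earlier perfect round has already reached $r$ — this is immediate from the forwarding-plane step of Procedure~\ref{algo:protocol}, which moves each packet until it reaches $r$ or does $n$ hops, and from the fact that the sink has no outgoing arc.

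The main obstacle — though it is really bookkeeping rather than a genuine difficulty — is making the logical connection between ``imperfect rounds are few'' and ``every packet routes by round $\lfloor n/3\rfloor$'' airtight: one has to note that the imperfect rounds need not be an initial segment, so a naive ``after $n/3$ rounds we stop failing'' is not quite the right statement. The clean formulation is the one above: bound the \emph{count} of imperfect rounds by $\lfloor n/3\rfloor$, observe that every non-imperfect round routes all currently-circulating packets to the sink, and conclude that no packet can survive past the $\lfloor n/3\rfloor$-th imperfect round, hence past round $\lfloor n/3\rfloor$. Finally I would state that the activation sequence witnessing all three guarantees is exactly the one produced by Procedure~\ref{algo:fair-stabilise} (concatenating, for each round $t$, the permutation $\mathrm{BFS}(\mathbb{O}_t,S_t)$ followed by $\text{reverse-BFS}(\mathbb{K}_t,S_t)$), which is fair by the remark immediately following that procedure.
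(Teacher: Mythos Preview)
Your proposal is correct and takes essentially the same approach as the paper: the paper's own proof is literally the single sentence ``It is immediate from Lemma~\ref{lem:stab-size-inc}'', and you have simply (and correctly) unpacked that immediacy by invoking Lemma~\ref{lmm:main-props} for the stability clause and counting the $+3$ increments from Lemma~\ref{lem:stab-size-inc} for the two $\lfloor n/3\rfloor$ clauses. One minor remark: under your interpretation (imperfect means an \emph{actual} packet fails to route), a perfect round clears all packets, so the imperfect rounds \emph{do} form an initial segment and your worry about non-consecutiveness is moot; the paper's parenthetical ``these may not be consecutive'' reflects a slightly broader reading (imperfect $=$ sink-component non-spanning), but either reading yields the same bound via the same counting.
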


\medskip

\noindent {\bf Acknowledgements.}
We thank Michael Schapira and Sharon Goldberg for interesting
discussions on this topic.

\bibliographystyle{plain}
\bibliography{bgp}

\begin{thebibliography}{10}

\bibitem{EH+07}
Thomas Erlebach, Alexander Hall, Alessandro Panconesi, and Danica Vukadinovic.
\newblock Cuts and disjoint paths in the valley-free model.
\newblock {\em Internet Mathematics}, 3(3), 2007.

\bibitem{FP08}
Alex Fabrikant and Christos~H. Papadimitriou.
\newblock The complexity of game dynamics: {BGP} oscillations, sink equilibria,
  and beyond.
\newblock In {\em SODA}, pages 844--853, 2008.

\bibitem{FSR11}
Alex Fabrikant, Umar Syed, and Jennifer Rexford.
\newblock There's something about {MRAI}: Timing diversity can exponentially
  worsen {BGP} convergence.
\newblock In {\em INFOCOM}, pages 2975--2983, 2011.

\bibitem{GR01}
Lixin Gao and Jennifer Rexford.
\newblock Stable internet routing without global coordination.
\newblock {\em IEEE/ACM Trans. Netw.}, 9(6):681--692, 2001.

\bibitem{GSW02}
Timothy Griffin, F.~Bruce Shepherd, and Gordon~T. Wilfong.
\newblock The stable paths problem and interdomain routing.
\newblock {\em IEEE/ACM Trans. Netw.}, 10(2):232--243, 2002.

\bibitem{GW99}
Timothy Griffin and Gordon~T. Wilfong.
\newblock An analysis of {BGP} convergence properties.
\newblock In {\em SIGCOMM}, pages 277--288, 1999.

\bibitem{GP01}
Timothy~G. Griffin and Brian~J. Premore.
\newblock An experimental analysis of {BGP} convergence time.
\newblock In {\em ICNP}, pages 53--61, 2001.

\bibitem{JK+08}
John~P. John, Ethan Katz-Bassett, Arvind Krishnamurthy, Thomas~E. Anderson, and
  Arun Venkataramani.
\newblock Consensus routing: The internet as a distributed system. (best
  paper).
\newblock In {\em NSDI}, pages 351--364, 2008.

\bibitem{Kar04}
Howard~J. Karloff.
\newblock On the convergence time of a path-vector protocol.
\newblock In {\em SODA}, pages 605--614, 2004.

\bibitem{Karp72}
Richard~M. Karp.
\newblock Reducibility among combinatorial problems.
\newblock In {\em Complexity of Computer Computations}, pages 85--103, 1972.

\bibitem{KK+07}
Nate Kushman, Srikanth Kandula, Dina Katabi, and Bruce~M. Maggs.
\newblock {R-BGP}: Staying connected in a connected world.
\newblock In {\em NSDI}, 2007.

\bibitem{LA+01}
Craig Labovitz, Abha Ahuja, Abhijit Bose, and Farnam Jahanian.
\newblock Delayed internet routing convergence.
\newblock {\em IEEE/ACM Trans. Netw.}, 9(3):293--306, 2001.

\bibitem{LSZ11}
Hagay Levin, Michael Schapira, and Aviv Zohar.
\newblock Interdomain routing and games.
\newblock {\em SIAM J. Comput.}, 40(6):1892--1912, 2011.

\bibitem{MR+03}
Zhuoqing~Morley Mao, Jennifer Rexford, Jia Wang, and Randy~H. Katz.
\newblock Towards an accurate as-level traceroute tool.
\newblock In {\em SIGCOMM}, pages 365--378, 2003.

\bibitem{PA+05}
Dan Pei, Matt Azuma, Daniel Massey, and Lixia Zhang.
\newblock {BGP-RCN}: improving {BGP} convergence through root cause
  notification.
\newblock {\em Computer Networks}, 48(2):175--194, 2005.

\bibitem{SZR10}
Michael Schapira, Yaping Zhu, and Jennifer Rexford.
\newblock Putting {BGP} on the right path: a case for next-hop routing.
\newblock In {\em HotNets}, page~3, 2010.

\bibitem{Ste98}
John~W. Stewart, III.
\newblock {\em {BGP4}: Inter-Domain Routing in the Internet}.
\newblock Addison-Wesley Longman Publishing Co., Inc., Boston, MA, USA, 1998.

\bibitem{VGE00}
Kannan Varadhan, Ramesh Govindan, and Deborah Estrin.
\newblock Persistent route oscillations in inter-domain routing.
\newblock {\em Computer Networks}, 32(1):1--16, 2000.

\bibitem{VCVB12}
Stefano Vissicchio, Luca Cittadini, Laurent Vanbever, and Olivier Bonaventure.
\newblock {iBGP} deceptions: More sessions, fewer routes.
\newblock In {\em INFOCOM}, 2012.

\end{thebibliography}




\section*{Appendix: Interdomain Routing and Model Technicalities.}
\label{appendix:model}

 The Internet is a union of subnetworks called {\em domains} or
 Autonomous Systems (ASes).
 The inter-domain routing protocol used in the Internet today is
 called the Border Gateway Protocol (BGP), and it works as follows~\cite{Ste98}.
 For destination $r$ and router $v$, each neighbouring router
 of $v$ announces to $v$ the route to $r$ that it has chosen 
 and from amongst these announced routes, $v$ chooses the route $\calP(v)$ that 
 it ranks highest.
 The router $v$ then in turn announces to its neighbouring routers its routing path $\calP(v)$.
 This process continues until an {\em equilibrium} is reached in which
 each router has chosen a route and for each
 router $v$, no neighbour of $v$ announces a route that $v$
 would rank higher than its currently routing path.
 The ranking of routes at a router depends on a 
 number of route attributes
 such as which neighbour announced the route, how long the route is, and which
 domains the route traverses.
 In fact, the ranking of routes at $v$ is a function of $v$'s 
 traffic engineering goals as well as the Service Level Agreements (SLAs), 
 that is, the economic contracts $v$ has made with its neighbours.

 It is well known that BGP can be thought of as a game~\cite{GSW02} and that
 BGP as a game may have no Nash equilibrium~\cite{VGE00,GW99}.
 There is now a vast literature studying the conditions under which BGP will or
 will not have an equilibrium (for example~\cite{GSW02,GR01}).
 It has been shown that in a BGP instance, the absence of a structure known as a
 {\em dispute wheel} implies that the BGP instance will have a unique
 equilibrium~\cite{GSW02}.
 There have been a number of papers analysing the worst-case convergence time 
 of BGP instances having no dispute wheel~\cite{Kar04,FSR11,SZR10}.
 There have also been many experimental papers measuring BGP
 convergence times~\cite{LA+01,GP01}
 and papers offering modifications to BGP with the goal of
 speeding up convergence~\cite{PA+05,KK+07}.

 However, BGP convergence is only a step towards the ultimate goal of successfully
 delivering packets to the destination.
 In fact, routers perform operations simultaneously
 on two basic levels: (1) on the {\em control plane} (i.e., where BGP
 exchanges routing information with other routers as described above)
 and (2) on the {\em forwarding plane} where routers use the routing information
 from BGP to forward packets to
 neighbouring routers towards the packets' ultimate destinations.
 That is, packets are being forwarded during the time that the control
 plane is attempting to settle on an equilibrium.

Recall our model in Section~\ref{prelim}.
We base our idealised routing protocols on BGP and two particularly 
important attributes that a routers uses to rank its available
routes. 
Firstly, a router might not trust certain domains to handle its
packets securely or in a timely fashion, so it may reject routes
traversing such unreliable domains. 
%
This motivates a {\em (no-go) filtering}, which will filter out any route that goes through an undesirable domain (i.e., a domain on the router's no-go filtering list).
Secondly, it has been argued that perhaps the most important attribute in
how a router $v$ ranks routes is the neighbour of $v$ 
announcing the route to $v$~\cite{SZR10}. 
That is, one can think of each router ordering its neighbours and
ranking any route from a lower ordered router over any route from
a higher ordered router.
This is called {\em next-hop routing}.
Thus, in our protocols,  a node ranks routes by first filtering out any route that
goes through nodes on its filtering list and then choosing from amongst the remaining
routes the one announced by the lowest ordered neighbour ({\em next-hop preference with filtering}).

As discussed, to analyse stability, it suffices to consider only the control plane.
But, to understand packet routing, we need to understand the interaction between the forwarding and control planes. 
Thus, we need to incorporate the actions of the forwarding plane into the standard model of the control plane \cite{GSW02}.
To do so, some assumptions must be made, particularly concerning the synchronisation between the planes.
In setting up a model for a practical problem, it is important to examine 
how the modelling assumptions relate to reality. So, here we briefly address
some technical aspects:

\begin{itemize}

\item {\bf Synchronisation of the Planes.}
Observe that, in our model, the control plane and the forwarding plane operate at a similar speed.  
This assumption is the worst case in that it maximises the rate at which 
inconsistencies are produced between the nodes routing paths.
In practice, updates in the control plane are much slower than the rate of packet transfer.   

\item {\bf Packet Cycling.}
When a packet gets stuck in a cycle, we will assume that, at the start of the next round, 
an adversary can position the packet at whichever node in the cycle they wish.

\item {\bf Fair Activation Sequences.}
We insist that activation sequences in the control plane are {\em fair}
in that all nodes update their routes at a similar rate. Clearly, the use of permutations 
ensures fairness.
From the theoretical point of view, fairness is important as it avoids
artificially routing packets by the use of unnatural and pathological activation sequences. For example,
it prohibits the use of activation sequences that are biased towards 
nodes in regions where disconnectivities arise and attempts to fix this by ``freezing''
other nodes until consistency is obtained.
Moreover, in practice, routers timings on the control plane are similar. 

\item {\bf Routing in Rounds.}
The use of rounds (defined by permutations) for routing is not vital
and is used for clarity of exposition and to emphasise fairness.
Also, packet forwarding is clearly not delayed until the end of a ``round"
in practice but, again, this is also not needed for the model.
The assumption is made as it clarifies the arguments needed in   
the analyses.  
For example, forwarding at the end of a round can be shown to be equivalent
to forwarding continuously throughout the round with the planes in sync; that is, 
packets are forwarded immediately and, within a round, the routing path at a node 
is updated just before the first packet a node sees is about to leave it.  
  
\item {\bf Route-Verification.}
Route-verification at the end of the round is our one non-worst case assumption 
and is not a standard aspect of BGP, albeit one that can be incorporated in
a fairly simple fashion by tools such as traceroute or an AS-level
traceroute tool such as that described by Mao~et~al.~\cite{MR+03}.
%
%
Route-verification is the focus of the influential paper of John et al. \cite{JK+08} on consensus 
routing. It is also used in the theory literature on incentives under BGP \cite{LSZ11}.
Due to the manipulative power provided by unfair activation sequences,
it is not hard to simplify our algorithms and omit the route-verification step
given the use of unfair activation sequences; see also \cite{SZR10}.
It remains an interesting open problem to obtain consistency using fair sequences without
route-verification.

\item {\bf Filtering.}
In this paper, we assume that each node can apply what is known                                                                     
as {\em import filtering} -- that is, not accepting certain routes from its neighbours.                                                            
This implicitly assumes that each node announces its routing path to all of its neighbours.                                            
In reality, each node may choose to apply {\em export filtering} -- that is, it may announce                                              
any particular route to only a subset of its neighbours (e.g., in order                                                         
to assure ``valley-free routing'' \cite{EH+07}).                                                                 

Export filtering can be incorporated into our model by allowing for neighbour specific import 
filtering rules, where a node $v$ can have a filtering list $\calD(v,w)$ for each neighbour $w$.                                                         
Of course, our lower bounds would still hold for this more general model, but it                                                        
would allow for more special cases to explore.                    

\end{itemize}


\end{document}